\def\version{2 May 2015}
\def\be{\begin{equation}}
\def\ee{\end{equation}}
\def\bm{\begin{multline}}
\def\bfig{\begin{figure}[htb]}
\def\efig{\end{figure}}
\newcommand{\dd}{{\rm d}}
\newcommand{\e}[1]{\,{\rm e}^{#1}\,}
\newcommand{\ii}{{\rm i}}
\def\tr{{\operatorname{tr\,}}}
\def\Tr{{\operatorname{Tr\,}}}
\def\supp{{\operatorname{supp\,}}}
\newcommand{\sumtwo}[2]{\sum_{\substack{#1 \\ #2}}}
\newcommand{\sumthree}[3]{\sum_{\substack{#1 \\ #2 \\ #3}}}
\newcommand{\suptwo}[2]{\sup_{\substack{#1 \\ #2}}}
\newcommand{\mintwo}[2]{\min_{\substack{#1 \\ #2}}}
\newcommand{\nnorm}[1]{|\!|\!| #1 |\!|\!|}
\numberwithin{equation}{section}
\newtheorem{theorem}{Theorem}[section]
\newtheorem{lemma}[theorem]{Lemma}
\newtheorem{corollary}[theorem]{Corollary}
\newcommand{\caA}{{\mathcal A}}
\newcommand{\caC}{{\mathcal C}}
\newcommand{\caH}{{\mathcal H}}
\newcommand{\caL}{{\mathcal L}}
\newcommand{\caM}{{\mathcal M}}
\newcommand{\caS}{{\mathcal S}}
\newcommand{\bbC}{{\mathbb C}}
\newcommand{\bbN}{{\mathbb N}}
\newcommand{\bbR}{{\mathbb R}}
\newcommand{\bbZ}{{\mathbb Z}}
\def\bbone{{\mathchoice {\rm 1\mskip-4mu l} {\rm 1\mskip-4mu l} {\rm 1\mskip-4.5mu l} {\rm 1\mskip-5mu l}}}
\newcommand{\Id}{\mathrm{\texttt{Id}}}
  \def\tagform@#1{\maketag@@@{\footnotesize{(#1)}\@@italiccorr}}
\renewcommand{\eqref}[1]{(\ref{#1})}
\begin{document}

{\hfill\small \version} \vspace{2mm}

\title[Some Properties of Correlations of Quantum Lattice Systems]{Some Properties of Correlations of Quantum Lattice Systems in Thermal Equilibrium}

\author{J\"urg Fr\"ohlich}
\address{Institut f\"ur Theoretische Physik, ETH Z\"urich, Switzerland}
\email{juerg@phys.ethz.ch}

\author{Daniel Ueltschi}
\address{Department of Mathematics, University of Warwick,
Coventry, CV4 7AL, United Kingdom}
\email{daniel@ueltschi.org}

\subjclass{82B10, 82B20, 82B26, 82B31}

\keywords{spin system, uniqueness of KMS state, Heisenberg model, Mermin-Wagner theorem, correlation inequalities}

\begin{abstract}
Simple proofs of uniqueness of the thermodynamic limit of KMS states and of the decay of equilibrium correlations are presented for a large class of quantum lattice systems at high temperatures. New quantum correlation inequalities for general Heisenberg models are described. Finally, a simplified derivation of a general result on power-law decay of correlations in 2D quantum lattice systems with continuous symmetries is given, extending results of Mc Bryan and Spencer for the 2D classical XY model.\\

\qquad \textit{We dedicate this note to the memory of our friend Oscar E. Lanford III.}
\end{abstract}

\thanks{\copyright{} 2015 by the authors. This paper may be reproduced, in its
entirety, for non-commercial purposes.}

\maketitle


\section{Introduction}

Quantum lattice systems have been widely studied for many decades, heuristically, numerically and mathematically. Many important rigorous results on equilibrium phase transitions and broken symmetries have been discovered for such systems at \textit{low enough} temperatures. Surveys of such results can be found, e.g., in \cite{BR, DLS, FILS, BKUel, DFF}, and references given there.
 
In this note, we study a general class of quantum lattice systems (see Sect. 2) in thermal equilibrium and present simple proofs of two basic results valid at \textit{high enough} temperatures: (i) the uniqueness of the KMS state in the thermodynamic limit; and (ii) exponential decay of correlations. We also establish: (iii) power-law decay of equilibrium correlations at \textit{arbitrary} temperatures in two-dimensional quantum lattice systems with continuous symmetries. Variants of all these results have been described in the literature; see \cite{BR, MS, KP, Uel} and references given there. Our purpose, in this note, is to delineate a natural level of generality for these results and to present simple or simplified proofs thereof. Furthermore, we derive some new correlation inequalities for quantum spin systems in thermal equilibrium. These inequalities do not appear to be as useful as, e.g., the GKS- and FKG inequalities known to hold for certain classes of classical lattice systems; yet, they contain useful information on the dependence of correlations on some coupling constants. In essence, our inequalities say that correlations among spin components become stronger if the coupling constants of the interaction terms among these spin components in the Hamiltonian are increased.

\section{Uniqueness of KMS state at high temperatures}

It is well-known that, at sufficiently high temperatures, there are no phase-transitions, and one expects that equilibrium states are unique. This claim is backed by various mathematical results, such as analyticity of the free energy at high temperatures. In this section, we show that, for a large class of quantum lattice systems, assuming that the temperature is high enough, only a single state satisfies the KMS condition that characterizes thermal equilibrium in quantum systems. We refer the reader to the monograph of Bratteli and Robinson \cite{BR} for a survey of earlier such results and references to the literature. These authors remark, in particular, that O. E. Lanford III observed that a uniqueness theorem follows from an earlier result due to Greenberg. Here we propose to present a variant of Lanford's approach and an improved estimate on the critical temperature. We think that our proof is somewhat simpler than the arguments described in \cite{BR}. The basic idea involved in all proofs we are aware of, including ours, is to use the KMS condition to derive an inhomogeneous linear equation for the correlators of an equilibrium state satisfying the KMS condition and to show that, at high enough temperatures, this equation has a unique solution (under suitable assumptions on the interactions specifying the particular quantum lattice system; see also \cite{Dir}).

For concreteness, we study quantum lattice systems on the simple (hyper) cubic lattice $\bbZ^{d}$.
Let $\caH_{x} = \bbC^{N}$ denote the Hilbert space of pure state vectors of the quantum-mechanical degrees of freedom, e.g., a quantum-mechanical spin, located at the site $x \in \bbZ^{d}$, and let 
$\caA_{x} = \caM_{N}(\bbC)$ denote the algebra of bounded linear operators acting on $\caH_{x}$, with $N<\infty$ \textit{independent} of $x \in \mathbb{Z}^{d}$. For an arbitrary finite subset 
$\Lambda \subset \bbZ^{d}$, we define
\be
\caH_{\Lambda} = \bigotimes_{x\in\Lambda} \caH_{x},
\ee
and we let $\caA_{\Lambda} = \otimes_{x\in\Lambda} \caA_{x}$ denote the algebra of bounded operators on $\caH_{\Lambda}$. If $\Lambda \subset \Lambda'$, we view $\caA_{\Lambda}$ as a subalgebra of 
$\caA_{\Lambda'}$ by identifying $A \in \caA_{\Lambda}$ with $A \otimes \bbone_{\Lambda'\setminus\Lambda} \in \caA_{\Lambda'}$.

Let $(\Phi_{X})_{X \subset \bbZ^{d}}$ denote an ``interaction'', that is, a collection of operators $\Phi_{X} \in \caA_{X}$, for any finite subset $X$ of $\bbZ^{d}$. The norm of an interaction is defined by
\be
\|\Phi\|_{r} = \sup_{x\in\bbZ^{d}} \sum_{X \ni x} \|\Phi_{X}\| r^{|X|}.
\ee
Here, $\|\Phi_{X}\|$ denotes the usual operator norm in $\caA_{X}$, and $r\geq1$ is a parameter. The Hamiltonian associated with a finite domain $\Lambda \subset \bbZ^{d}$ is given by
\be
\label{def Ham}
H_{\Lambda} = \sum_{X \subset \Lambda} \Phi_{X}.
\ee
For $t \in \bbC$, let $\alpha_{t}^{\Lambda}$ be the linear automorphism of $\caA_{\Lambda}$ that describes the time evolution of operators (``observables'') in $\caA_{\Lambda}$, namely
\be
\alpha_{t}^{\Lambda}(A) := \e{\ii t H_{\Lambda}} A \e{-\ii t H_{\Lambda}}.
\ee
In order to describe infinite systems, we consider the $C^{*}$-algebra, $\mathcal{A}$, of quasi-local observables, which is the norm-completion of the usual algebra of local observables
\be
\caA = \overline{\mathcal{A}_{0}}, \qquad \text{where} \qquad
\mathcal{A}_{0} := \bigvee_{\Lambda \nearrow \bbZ^{d}} \caA_{\Lambda}.
\ee
It is well-known that if $\|\Phi\|_{r} < \infty$, for some $r>1$, there exists a unique one-parameter group of $^{*}$automorphisms of $\caA$, $\alpha_{t} : \caA \to \caA$, with $t\in \mathbb{R}$, such that
\be
\lim_{n\to\infty} \| \alpha_{t}^{\Lambda_{n}}(A) - \alpha_{t}(A) \| = 0,
\ee
for an arbitrary local observable $A$ and any sequence of domains $(\Lambda_{n})$ increasing to 
$\bbZ^{d}$; that is, such that any finite set $\Lambda$ is contained in all $\Lambda_{n}$'s, as soon as $n$ is large enough (depending on $\Lambda$). The operator function $\alpha_{t}(A)$ has an analytic continuation  in $t$ to the complex plane, for all $A \in \mathcal{A}_0$. A ``state'' is a bounded, positive, normalized linear functional on 
$\caA$. A state $\rho$ describes \textit{thermal equilibrium at inverse temperature} $\beta$ iff it satisfies the KMS condition, i.e., iff
\be
\label{KMS}
\rho \bigl( AB \bigr) = \rho \bigl( B \, \alpha_{\ii\beta}(A) \bigr),
\ee
for all $A,B$ in $\caA_{0}$. By considering sequences of finite-volume (Gibbs) equilibrium states, a standard compactness argument shows the existence of cluster points of states that satisfy the KMS condition, i.e., the existence of KMS states is an almost trivial fact. We are now prepared to state our uniqueness theorem.

\begin{theorem}
\label{thm uniqueness}
Assume that
\[
\beta \, \|\Phi\|_{N+1} < (2N)^{-1}.
\]
Then there exists a unique KMS state at inverse temperature $\beta$.
\end{theorem}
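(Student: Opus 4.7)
The plan is to use the KMS condition in the manner of Lanford to convert $\rho(A)$ for any local $A$ into an integral of $\rho$ against operators with slightly enlarged support, each step paying a factor of order $\beta\|\Phi\|$. Showing that the resulting map is a strict contraction on a suitable weighted norm then gives uniqueness of the KMS state.

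First I would fix, at each site, an orthonormal basis $\{T^{a}\}_{a=1}^{N^{2}-1}$ of the traceless Hermitian elements of $\caM_{N}(\bbC)$ with respect to the Hilbert--Schmidt inner product. The completeness relation $\sum_{a}(T^{a})_{ij}(T^{a})_{kl}=\delta_{il}\delta_{jk}-N^{-1}\delta_{ij}\delta_{kl}$ yields the algebraic identity $\sum_{a}[T^{a},[A,T^{a}]]=2\tr(A)\,\bbone-2NA$ for every $A\in\caM_{N}(\bbC)$. Consequently, for any local $A$ with $\tr_{x}A=0$ at some site $x$ of its support,
\be
A=-\frac{1}{2N}\sum_{a=1}^{N^{2}-1}\bigl[T^{a}_{x},[A,T^{a}_{x}]\bigr],
\ee
where $T^{a}_{x}$ denotes $T^{a}$ acting at site $x$ with identity elsewhere. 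Applying a KMS state $\rho$ to this identity and using the KMS relation $\rho([B,C])=\rho\bigl(C(\alpha_{\ii\beta}(B)-B)\bigr)$ together with the Duhamel formula $\alpha_{\ii\beta}(T^{a}_{x})-T^{a}_{x}=-\int_{0}^{\beta}\alpha_{\ii s}([H,T^{a}_{x}])\,ds$ and the decomposition $[H,T^{a}_{x}]=\sum_{Y\ni x}[\Phi_{Y},T^{a}_{x}]$ produces the master equation
\be
\rho(A)=\frac{1}{2N}\sum_{a}\sum_{Y\ni x}\int_{0}^{\beta}\rho\bigl([A,T^{a}_{x}]\,\alpha_{\ii s}([\Phi_{Y},T^{a}_{x}])\bigr)\,ds.
\ee
The integrand is an operator supported in $\supp A\cup Y$, so expanding it in the product basis of partially-traceless observables gives a linear relation between $\rho(A)$ and values of $\rho$ on observables with (slightly) larger support.

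To deduce uniqueness, let $\delta=\rho_{1}-\rho_{2}$ be the difference of two KMS states. Since $\delta(\bbone)=0$, $\delta$ is completely determined by its values on local observables that are partially traceless at every site of their support. On this subspace I would introduce the weighted sup-norm
\be
\nnorm{\delta}:=\sup\Bigl\{\frac{|\delta(A)|}{\|A\|\,r^{|\supp A|}}:\ A\ \text{local, partially traceless}\Bigr\},\qquad r=N+1,
\ee
which is finite ($\le 2$) because $|\delta(A)|\le 2\|A\|$ and $r>1$. Inserting the master equation with $\rho$ replaced by $\delta$ (the identity-component term in the basis expansion drops out because $\delta(\bbone)=0$), bounding $\|\alpha_{\ii s}([\Phi_{Y},T^{a}_{x}])\|$ via its Dyson series (convergent uniformly for $s\in[0,\beta]$ in the stated range), and expanding the integrand in the product basis should produce an inequality of the form $\nnorm{\delta}\le 2N\beta\,\|\Phi\|_{N+1}\,\nnorm{\delta}$. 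The hypothesis $\beta\,\|\Phi\|_{N+1}<(2N)^{-1}$ then forces $\nnorm{\delta}=0$, i.e.\ $\rho_{1}=\rho_{2}$.

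The principal technical obstacle is the combinatorial estimate yielding the sharp constants: one must expand the operator $[A,T^{a}_{x}]\,\alpha_{\ii s}([\Phi_{Y},T^{a}_{x}])$ in the partially-traceless product basis, track how the support enlarges from $\supp A$ to $\supp A\cup Y$, and arrange the sum over $Y\ni x$ so that precisely $\|\Phi\|_{N+1}$ (rather than some weaker $\|\Phi\|_{r}$) controls the right-hand side. The weight $r=N+1$ is what balances the $N^{2}-1$ traceless generators per site together with the identity element, while the $(2N)^{-1}$ threshold traces directly back to the $1/(2N)$ prefactor in the $\mathfrak{su}(N)$ Casimir identity of the first step.
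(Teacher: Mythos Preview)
Your overall strategy matches the paper's: convert the KMS condition into a linear fixed-point equation and show contraction at high temperature. But the implementation differs in ways that leave a genuine gap for the stated threshold. The paper does not use the Casimir identity; it proves a dedicated lemma (Lemma~\ref{lem decomp}) writing any traceless hermitian $N\times N$ matrix as $\sum_{i=1}^{N-1}[B_i,C_i]$ with $\sum_i\|B_i\|_{\rm HS}\|C_i\|_{\rm HS}\le\sqrt{N}\,\|A\|_{\rm HS}$, and it works with the norm $\nnorm{\phi}=\sup_j|\phi(e_j)|$ over product-basis elements rather than a support-weighted operator norm. The decisive estimate is $|\phi(\otimes_{x\notin X}e_{j_x}\otimes A)|\le\nnorm{\phi}\,\|A\|_{\rm HS}\,N^{|X|}$, obtained by Cauchy--Schwarz on the $N^{2|X|}$ basis coefficients; this factor $N^{|X|}$ is what forces the interaction norm $\|\Phi\|_{N(1+s)}$ in the multi-commutator recursion, and taking $s=1/N$ is the actual origin of the subscript $N+1$. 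The overall prefactor $N$ (hence the $(2N)^{-1}$ threshold, the $2$ coming from commutator bounds) arises from the $\sqrt{N}$ in Lemma~\ref{lem decomp} combined with $\|c_i\|\le\sqrt{N}\,\|c_i\|_{\rm HS}$.

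Your rationale for $r=N+1$ (``balancing the $N^2-1$ generators plus identity'') is therefore off, and the $\tfrac{1}{2N}$ from the Casimir identity does not survive: it is immediately offset by the sum over $N^2-1$ generators together with the two factors of $\|T^a\|$ coming from $[A,T^a_x]$ and $[\Phi_Y,T^a_x]$, leaving an $O(1)$ rather than $O(1/N)$ prefactor. There is a second gap: your norm is defined only on partially traceless observables, but the integrand $[A,T^a_x]\,\alpha_{\ii s}([\Phi_Y,T^a_x])$ is not traceless at every site of its support. Saying ``the identity-component drops out because $\delta(\bbone)=0$'' removes only the fully scalar piece; components that are the identity at \emph{some} sites but traceless at others remain, and bounding the $\ell^1$-sum of their basis coefficients by the operator norm is precisely the combinatorial step you acknowledge as the ``principal technical obstacle'' but do not carry out. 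Your scheme can plausibly be completed to give uniqueness for sufficiently small $\beta$, but not with the constant $\beta\,\|\Phi\|_{N+1}<(2N)^{-1}$; for that one needs the Hilbert--Schmidt machinery of Lemma~\ref{lem decomp} and the $N^{|X|}$ bound that the paper develops.
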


We actually prove the theorem under the more general condition that there exists $s<1/N$ such that $2\beta \|\Phi\|_{N (1+s)} < s$.
As mentioned above, the strategy of our proof is to reformulate the KMS condition as an equation for the equilibrium state that has a unique solution when $\beta$ is small enough. In order to derive this equation, we express observables as commutators of operators. The proof of Theorem \ref{thm uniqueness} will be given after the one of Lemma \ref{lem decomp}, which we state next. 

Here and in the sequel, $\|\cdot\|_{\rm HS}$ denotes the normalized Hilbert-Schmidt norm
\be
\| A \|_{\rm HS}^{2} = \frac1{\dim \caH_{\Lambda}} \Tr A^{*} A.
\ee
Notice that
\be
\frac1{\sqrt{\dim \caH_{\Lambda}}} \|A\| \leq \|A\|_{\rm HS} \leq \|A\|
\ee
for all $A \in \caA_{\Lambda}$.

\begin{lemma}
\label{lem decomp}
Let $A$ be a hermitian $N \times N$ matrix with the property that $\Tr A = 0$. Then there exist hermitian 
$N \times N$ matrices $B_{1},\dots, B_{N-1}$ and $C_{1},\dots,C_{N-1}$ such that
\[
\begin{split}
&A = \sum_{i=1}^{N-1} [B_{i}, C_{i}], \\
&\sum_{i=1}^{N-1} \|B_{i}\|_{\rm HS} \, \|C_{i}\|_{\rm HS} \leq \sqrt N \, \|A\|_{\rm HS}.
\end{split}
\]
\end{lemma}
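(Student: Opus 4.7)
First, by the spectral theorem, write $A=UDU^{*}$ with $U$ unitary and $D=\mathrm{diag}(\lambda_{1},\dots,\lambda_{N})$ real, $\sum_{j}\lambda_{j}=0$. Since the Hilbert-Schmidt norm is unitarily invariant and $U[B,C]U^{*}=[UBU^{*},UCU^{*}]$, any decomposition of $D$ satisfying the asserted bound yields one for $A$ by conjugation. So the plan reduces to the diagonal case.

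The next ingredient is a careful ordering of the eigenvalues. Before telescoping $D$, I would apply a permutation (a further unitary) so that the partial sums $\mu_{i}:=\lambda_{1}+\cdots+\lambda_{i}$ stay as small as possible in absolute value. A standard greedy argument --- at each step append the next available $\lambda_{j}$ whose sign opposes the current partial sum, which is always possible because $\sum_{j}\lambda_{j}=0$ --- gives $|\mu_{i}|\leq\max_{j}|\lambda_{j}|$ for all $i=1,\dots,N-1$. I expect this rearrangement to be the main non-obvious step: the naive ordering only yields $|\mu_{i}|=O(\sqrt{N})\max_{j}|\lambda_{j}|$ by Cauchy-Schwarz, which would lose an extra $\sqrt{N}$ in the final bound.

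With $\mu_{0}=\mu_{N}=0$, telescoping $\lambda_{i}=\mu_{i}-\mu_{i-1}$ gives
\[
D=\sum_{i=1}^{N-1}\mu_{i}\bigl(E_{ii}-E_{i+1,i+1}\bigr),
\]
where $E_{jk}$ are the standard matrix units. Each rank-two diagonal piece $E_{ii}-E_{i+1,i+1}$ is (up to an $\ii$-factor) the commutator of Pauli-type hermitian matrices $\sigma_{x}^{(i,i+1)},\sigma_{y}^{(i,i+1)}$ supported in the block spanned by $e_{i},e_{i+1}$: one has $[\sigma_{x}^{(i,i+1)},\sigma_{y}^{(i,i+1)}]=2\ii(E_{ii}-E_{i+1,i+1})$. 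I would take $B_{i}$ and $C_{i}$ as real multiples of these two Pauli blocks, with the scalars chosen so that $[B_{i},C_{i}]=\mu_{i}(E_{ii}-E_{i+1,i+1})$ (absorbing the $\ii$ in the paper's commutator convention) and $\|B_{i}\|_{\rm HS}=\|C_{i}\|_{\rm HS}$, the product-minimizing choice dictated by AM-GM. Using $\|\sigma_{x,y}^{(i,i+1)}\|_{\rm HS}^{2}=2/N$, a direct computation then yields $\|B_{i}\|_{\rm HS}\|C_{i}\|_{\rm HS}=|\mu_{i}|/N$.

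Summing and inserting the rearrangement bound,
\[
\sum_{i=1}^{N-1}\|B_{i}\|_{\rm HS}\|C_{i}\|_{\rm HS}=\frac{1}{N}\sum_{i=1}^{N-1}|\mu_{i}|\leq\frac{N-1}{N}\max_{j}|\lambda_{j}|\leq\frac{N-1}{N}\sqrt{\sum_{j}\lambda_{j}^{2}}=\frac{N-1}{N}\sqrt{N}\,\|A\|_{\rm HS}\leq\sqrt{N}\,\|A\|_{\rm HS},
\]
which is the asserted inequality. A secondary technical subtlety will be to reconcile hermiticity conventions, since the commutator of two hermitian matrices is anti-hermitian in the standard convention; the factor of $\ii$ produced by the Pauli bracket has to be absorbed consistently with the paper's convention for $[\,\cdot\,,\,\cdot\,]$ (or, equivalently, by allowing one of $B_{i},C_{i}$ to carry the $\ii$).
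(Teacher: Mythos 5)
Your proposal is correct and follows essentially the same route as the paper's proof: greedy ordering of the eigenvalues to keep the partial sums small, telescoping the diagonalized matrix into embedded Pauli $\sigma^{3}$ blocks, and writing each block as a commutator of $\sigma^{1}$- and $\sigma^{2}$-type blocks with $\|\sigma\|_{\rm HS}^{2}=2/N$. Your partial-sum bound $|\mu_{i}|\le\max_{j}|\lambda_{j}|$ is marginally sharper than the paper's $\sqrt N\,\|A\|_{\rm HS}$, and your remark about absorbing the factor $\ii$ (a commutator of hermitian matrices is literally anti-hermitian) correctly flags a convention issue that the paper's own statement and proof gloss over.
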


\begin{proof}
Let $a_{1},\dots,a_{N}$ be the eigenvalues of $A$ (repeated according to their multiplicity). We have that
\be
\sum_{i=1}^{N} a_{i} = 0, \qquad \sum_{i=1}^{N} |a_{i}|^{2} = N \, \|A\|_{\rm HS}^{2}.
\ee
In particular, each $|a_{i}|$ is bounded above by $\sqrt N \|A\|_{\rm HS}$. Let us order the eigenvalues so that
\be
\Bigl| \sum_{i=1}^{k} a_{i} \Bigr| \leq \sqrt N \, \|A\|_{\rm HS}
\ee
for all $1 \leq k \leq N-1$. This is indeed possible, as can be seen by induction using $\sum a_{i} = 0$: If $0 \leq \sum^{k} a_{i} \leq \sqrt N \|A\|_{\rm HS}$, we can find $a_{k+1} \leq 0$ among the remaining eigenvalues such that $| \sum^{k+1} a_{i} | \leq \sqrt N \|A\|_{\rm HS}$. And if the partial sum is negative, we can find $a_{k+1} \geq 0$ among the remaining eigenvalues, with the same conclusion.

We work in a basis such that $A$ is diagonal and its eigenvalues are ordered so they satisfy the properties above.
Let $\tilde a_{k} = \sum_{i=1}^{k} a_{i}$, and let $\sigma_{j,j+1}^{1}, \sigma_{j,j+1}^{2}, \sigma_{j,j+1}^{3}$ be $N \times N$ matrices that are equal to Pauli matrices on the $2\times2$ block that contains $(j,j)$ and $(j+1,j+1)$, and that are equal to zero everywhere else. It is not hard to check that
\be
A = \sum_{j=1}^{N-1} \tilde a_{j} \, \sigma_{j,j+1}^{3}.
\ee
We therefore have that
\be
A = \tfrac12 \sum_{j=1}^{N-1} \tilde a_{j} \, [\sigma_{j,j+1}^{1}, \sigma_{j,j+1}^{2}],
\ee
which proves the first claim. The bound follows from $|\tilde a_{j}| \leq \sqrt N \|A\|_{\rm HS}$ and $\|\sigma_{j,j+1}^{i}\|_{\rm HS}^{2} = 2/N$.
\end{proof}

\begin{proof}[Proof of Theorem \ref{thm uniqueness}]
Let $(e_{i})_{i=0}^{N^{2}-1}$ be a hermitian basis of $\caM_{N}(\bbC)$, with $e_{0}=\bbone$, $\Tr e_{i} = 0$ if $1 \neq 0$, and $\| e_{i} \| = 1$, for all $i$. Let $J$ be the set of multi-indices $j = (j_{x})_{x \in \bbZ^{d}}$, $0 \leq j_{x} \leq N^{2}-1$, with finite support
\be
\supp j = \{ x \in \bbZ^{d} | j_{x} \neq 0 \}.
\ee
Given $j \in J$, let $e_{j} = \otimes_{x \in \supp j} e_{j_{x}} \in \caA_{\supp j}$. The linear span of $\{ e_{j} \}_{j \in J}$ is dense in $\caA$.

Let $\tr$ denote the normalized trace on $\caA$; it is equal to $\frac1{\dim \caH_{\Lambda}} \Tr$ on $\caA_{\Lambda}$ and it can be extended to $\caA$ by continuity. The state $\rho$ can be written as $\rho = \tr + \varepsilon$ where $\varepsilon(\bbone)=0$. We actually have that
\be
\varepsilon(e_{j}) = \begin{cases} \rho(e_{j}) & \text{if } j \not\equiv 0, \\ 0 & \text{if } j\equiv0. \end{cases}
\ee
Using Lemma \ref{lem decomp}, we have that
\be
e_{j} = \frac1{|\supp j|} \sum_{y \in \supp j} \sum_{i=1}^{N-1} \bigl[ \otimes_{x \neq y} e_{j_{x}} \otimes b_{i}^{(j_{y})}, \otimes_{x \neq y} \bbone \otimes c_{i}^{(j_{y})} \bigr],
\ee
for $j \not\equiv 0$. Here, $b_{i}^{(k)}, c_{i}^{(k)}$ are the matrices $B_{i}, C_{i}$ of Lemma \ref{lem decomp} in the case where the matrix $A$ is $e_{k}$.

We now use this decomposition and the KMS condition \eqref{KMS} in order to get an equation for $\varepsilon$. For $j \not\equiv 0$,
\be
\label{eq from KMS}
\begin{split}
\varepsilon(e_{j}) &= \frac1{|\supp j|} \sum_{y \in \supp j} \sum_{i=1}^{N-1} \rho \Bigl( \bigl[ \otimes_{x \neq y} e_{j_{x}} \otimes b_{i}^{(j_{y})}, \otimes_{x \neq y} \bbone \otimes c_{i}^{(j_{y})} \bigr] \Bigr) \\
&= \frac1{|\supp j|} \sum_{y \in \supp j} \sum_{i=1}^{N-1} \rho \Bigl( \otimes_{x \neq y} e_{j_{x}} \otimes b_{i}^{(j_{y})} \cdot (\bbone - \alpha_{\ii\beta}) \otimes_{x \neq y} \bbone \otimes c_{i}^{(j_{y})} \Bigr) \\
&= \delta(e_{j}) + K_{\beta} \varepsilon(e_{j}).
\end{split}
\ee
In the above equation, we set
\be
\delta(e_{j}) = \frac1{|\supp j|} \sum_{y \in \supp j} \sum_{i=1}^{N-1} \tr \Bigl( \otimes_{x \neq y} e_{j_{x}} \otimes b_{i}^{(j_{y})} \cdot (\bbone - \alpha_{\ii\beta}) \otimes_{x \neq y} \bbone \otimes c_{i}^{(j_{y})} \Bigr),
\ee
and the operator $K_{\beta}$ is defined by
\be
\label{def K beta}
(K_{\beta} \phi)(e_{j}) = \frac1{|\supp j|} \sum_{y \in \supp j} \sum_{i=1}^{N-1} \phi \Bigl( \otimes_{x \neq y} e_{j_{x}} \otimes b_{i}^{(j_{y})} \cdot (\bbone - \alpha_{\ii\beta}) \otimes_{x \neq y} \bbone \otimes c_{i}^{(j_{y})} \Bigr).
\ee
Notice that $K_{\beta}$ is a linear operator on the Banach space $\caL(\caA)$ of linear functionals on $\caA$. Equation \eqref{eq from KMS} can be written as
\be
\label{the smart equation}
(\bbone - K_{\beta}) \varepsilon = \delta.
\ee

Let us introduce the following norm on $\caL(\caA)$:
\be
\nnorm{\phi} = \sup_{j \in J} |\phi(e_{j})|.
\ee
Because $\| e_{j} \| = 1$ for all $j$, we have $\nnorm{\phi} \leq \|\phi\|$ and $(\caL(\caA), \nnorm\cdot)$ is a normed vector space. We consider $K_{\beta}$ as an operator on $(\caL(\caA), \nnorm\cdot)$ and we show that its norm is strictly less than 1; the solution of \eqref{the smart equation} is then unique. The norm of $K_{\beta}$ is equal to
\be
\|K_{\beta}\| = \sup_{\nnorm\phi = 1} \sup_{j \in J} |K_{\beta} \phi(e_{j})|.
\ee

Recall that $\alpha_{\ii\beta} = \lim_{\Lambda} \alpha_{\ii\beta}^{\Lambda}$ (with convergence in the operator norm) and that $\alpha_{\ii\beta}^{\Lambda}(A)$, $A \in \caA$, has a well-known expansion in multiple commutators. From \eqref{def K beta}, we get
\bm
\bigl| K_{\beta} \phi(e_{j}) \bigl| \leq \frac1{|\supp j|} \sum_{y \in \supp j} \sum_{i=1}^{N-1} \sum_{n\geq1} \frac{\beta^{n}}{n!} \sup_{\Lambda \subset \bbZ^{d}} \sum_{X_{1}, \dots, X_{n} \subset \Lambda} \\
\Bigl| \phi \Bigl( \otimes_{x \neq y} e_{j_{x}} \otimes b_{i}^{(j_{y})} \bigl[ \Phi_{X_{n}}, \dots, \bigl[ \Phi_{X_{1}}, \otimes_{x \neq y} \bbone \otimes c_{i}^{(j_{y})} \bigr] \cdots \bigr] \Bigr) \Bigr|.
\end{multline}
Because of the commutators, the sum over the $X_{k}$'s is restricted to subsets that satisfy
\be
\label{set constraints}
\begin{split}
&X_{1} \ni y, \\
&X_{2} \cap X_{1} \neq \emptyset, \\
&\qquad \vdots \\
&X_{n} \cap (X_{1} \cup \dots \cup X_{n-1}) \neq \emptyset.
\end{split}
\ee
Let $A = \sum_{(j_{x}')_{x\in X}} a_{j'} e_{j'}$ be an operator in $\caA_{X}$. For any $(j_{x})_{x \notin X}$, we have
\be
\label{sonder bound}
\begin{split}
\bigl| \phi \bigl( \otimes_{x \notin X} e_{j_{x}} \otimes A \bigr) \bigr| &= \Bigl| \sum_{(j_{x}')_{x \in X}} a_{j'} \; \phi \bigl( \otimes_{x \notin X} e_{j_{x}} \otimes_{x \in X} e_{j_{x}'} \bigr) \Bigr| \\
&\leq \nnorm\phi \sum_{(j_{x}')_{x \in X}} |a_{j'}| \\
&\leq \nnorm\phi \|A\|_{\rm HS} N^{|X|}.
\end{split}
\ee
Using Eq.\ \eqref{sonder bound} with $\nnorm\phi = 1$, $\|AB\|_{\rm HS} \leq \|A\| \, \|B\|_{\rm HS}$, and $\|c_{i}^{(j_{y})}\| \leq \sqrt N \|c_{i}^{(j_{y})}\|_{\rm HS}$, we get
\be
\begin{split}
\bigl| K_{\beta} \phi(e_{j}) \bigl| &\leq \sqrt N \sup_{y \in \bbZ^{d}} \sum_{n\geq1} \frac{(2\beta)^{n}}{n!} \sum_{X_{1}, \dots, X_{n} : y} \Bigl( \prod_{k=1}^{n} \|\Phi_{X_{k}}\| N^{|X_{k}|} \Bigr) \sum_{i=1}^{N-1} \|b_{i}^{(j_{y})}\|_{\rm HS} \|c_{i}^{(j_{y})}\|_{\rm HS} \\
&\leq N \sup_{y \in \bbZ^{d}} \sum_{n\geq1} \frac{(2\beta)^{n}}{n!} \sum_{X_{1},\dots,X_{n} : y} \; \prod_{k=1}^{n} \|\Phi_{X_{k}}\| N^{|X_{k}|}.
\end{split}
\ee
We have used Lemma \ref{lem decomp} to get the last line.
The constraint $X_{1}, \dots, X_{n} : y$ means that \eqref{set constraints} must be respected.
The final step is to estimate the sum over such subsets. This can be conveniently done with an inductive argument. Namely, let $R_{0}=0$ and, for $m\geq1$, let
\be
R_{m} = \sup_{y \in \bbZ^{d}} \sum_{n=1}^{m} \frac{(2\beta)^{n}}{n!} \sum_{X_{1},\dots,X_{n} : y} \prod_{k=1}^{n} \|\Phi_{X_{k}}\| N^{|X_{k}|}.
\ee
Summing first over $X_{1} \ni y$, then over sets that intersect sites of $X_{1}$, we get
\be
\begin{split}
R_{m} &\leq 2\beta \sup_{y} \sum_{X_{1} \ni y} \|\Phi_{X_{1}}\| N^{|X_{1}|} \prod_{x \in X_{1}} \Bigl( \sum_{n=1}^{m} \frac{(2\beta)^{n-1}}{(n-1)!} \sum_{X_{2},\dots,X_{n} : x} \prod_{k=2}^{n} \|\Phi_{X_{k}}\| N^{|X_{k}|} \Bigr) \\
&\leq 2\beta \sup_{y} \sum_{X_{1} \ni y} \|\Phi_{X_{1}}\| N^{|X_{1}|} (1+R_{m-1})^{|X_{1}|}.
\end{split}
\ee
It follows easily that $R_{m} \leq r$ for all $m$, and all $r$ such that $2\beta \|\Phi\|_{N (1+r)} \leq r$. Then $\|K_{\beta}\| \leq Nr$, and the assumption of Theorem \ref{thm uniqueness} implies the existence of $r$ such that $Nr < 1$.
\end{proof}

\section{High temperature expansions}
\label{sec high}

(Connected) correlations between operators localized in disjoint regions of the lattice vanish when $\beta=0$. For positive, but small $\beta$ and short-range interactions, correlations decay exponentially fast. This can be proven in several different ways. Here we use the method of \textit{cluster expansions}, which is robust and applies to both classical and quantum systems. The main result of this section and our method of proof are not new; see \cite[Section V.5]{Sim} and references therein. Our approach is based on the simple exposition in \cite{Uel}. It is quite direct and straightforward.

As an alternative to cluster expansions, one should mention the method of Lee and Yang, i.e., general \textit{Lee-Yang theorems}. This method establishes and then exploits analyticity properties of correlation functions in variables corresponding to external magnetic fields. It yields exponential decay of correlations, provided the magnetic field variables belong to certain subsets of the complex plane. We do not wish to describe these matters in more detail here; but see \cite{LP,GRS,Pfi,FR} for precise statements of results and proofs.

\subsection{Analyticity of the free energy}

Let $\Lambda$ be a finite subset of $\bbZ^{d}$.
Let $\caS_{\Lambda}$ denote the set of finite sequences $(X_{1},\dots,X_{n})$, with $n\geq1$ and $X_{i} \subset \Lambda$ for all $i$. Let $\caC_{\Lambda} \subset \caS_{\Lambda}$ denote the set of {\it clusters}, i.e., the set of objects $C = (X_{1},\dots,X_{n})$ such that the graph
\be
\bigl\{ (i,j) : X_{i} \cap X_{j} \neq \emptyset \bigr\}
\ee
is connected. We also let $\supp C = \cup_{i} X_{i}$ denote the support of $C$. We introduce the following weight function on $\caS_{\Lambda}$: If $C = (X_{1},\dots,X_{n})$,
\be
w(C) = \frac{\beta^{n}}{n!} \tr \Phi_{X_{1}} \dots \Phi_{X_{n}}.
\ee
Finally, let $\varphi$ denote the the usual combinatorial function of cluster expansions, namely
\be
\label{def phi}
\varphi(C_{1},\dots,C_k) = \begin{cases} 1 & \text{if } k = 1, \\ \sum_{g \in {\rm Conn}(k)} \prod_{\{i,j\} \in g} (-1_{\supp C_{i} \cap \supp C_{j} \neq \emptyset}) & \text{if } k\geq 2. \end{cases}
\ee
Here, ${\rm Conn}(k)$ is the set of connected graphs of $k$ vertices, and the product is over the edges of the connected graph $g$.

The first result deals with the partition function
\be
Z_{\Lambda} = \tr \e{-\beta H_{\Lambda}},
\ee
with $H_{\Lambda}$ the Hamiltonian defined in Eq.\ \eqref{def Ham}. As before, tr denotes the normalized trace. It follows easily from Theorem \ref{thm part fct} that the free energy $f_{\Lambda}(\beta) = -\frac1{|\Lambda|} \log Z_{\Lambda}$ is analytic in $\beta$ in the infinite-volume limit.

\begin{theorem}
\label{thm part fct}
Assume that there exists $a>0$ such that
\[
\beta \|\Phi\|_{\e{a} (1+a)} \leq a.
\]
Then the partition function has the expression
\[
Z_{\Lambda} = \exp \Bigl\{ \sum_{k\geq1} \frac1{k!} \sum_{C_{1}, \dots. C_{k} \in \caC_{\Lambda}} \varphi(C_{1},\dots,C_{k}) \prod_{i=1}^{k} w(C_{i}) \Bigr\}.
\]
The sums are absolutely convergent, and
\[
1 + \sum_{k\geq2} k \sum_{C_{1},\dots,C_{k} \in \caC_{\Lambda}} \bigl| \varphi(C_{2},\dots,C_{k}) \bigr| \prod_{i=2}^{k} |w(C_{i})| \leq \e{a |\supp C_{1}|},
\]
for all $\Lambda \subset \bbZ^{d}$ and all $C_{1} \in \caC_{\Lambda}$.
\end{theorem}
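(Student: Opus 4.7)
The plan is to apply the standard Mayer/polymer cluster expansion, as in \cite{Uel}, taking care to exploit the tensor-product structure of $\caA_\Lambda$. First, I expand $\e{-\beta H_\Lambda}$ as a power series in $\beta$ and use $H_\Lambda = \sum_{X\subset\Lambda}\Phi_X$ to rewrite
\be
Z_\Lambda = 1 + \sum_{n\geq 1}\frac{(-\beta)^n}{n!}\sum_{X_1,\ldots,X_n\subset\Lambda}\tr\bigl(\Phi_{X_1}\cdots\Phi_{X_n}\bigr),
\ee
so each sequence $S\in\caS_\Lambda$ contributes a signed version of $w(S)$. Given $S=(X_1,\ldots,X_n)$, I would form its intersection graph and decompose $S$ into ordered subsequences $C_1,\ldots,C_k\in\caC_\Lambda$ corresponding to the connected components. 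Since $\Phi_{X_i}$ and $\Phi_{X_j}$ commute whenever $X_i\cap X_j=\emptyset$, the operators of distinct components can be permuted past each other inside the product, and the normalized trace on $\caH_\Lambda$ factorizes over the disjoint supports $\supp C_1,\ldots,\supp C_k$. Collecting the interleavings of indices via the appropriate multinomial coefficient converts the sum over sequences into a sum over unordered multisets of clusters weighted by $\prod_i w(C_i)$.

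Next, I would invoke the classical combinatorial exp-log identity that turns such a sum over multisets of clusters (with compatibility graph recording pairs of clusters with overlapping supports) into an exponential of a sum indexed by tuples of clusters with weight $\varphi(C_1,\ldots,C_k)\prod_i w(C_i)$; here $\varphi$ is the generating function of the connected spanning subgraphs of the overlap graph, exactly as in the definition \eqref{def phi}. This yields the desired product formula for $Z_\Lambda$ at the level of formal series; all manipulations become rigorous once the absolute-convergence bound below is established.

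For absolute convergence and the stated bound, I would verify a Kotecky-Preiss-type criterion
\be
\sup_{y\in\bbZ^d}\sum_{\substack{C\in\caC_\Lambda \\ \supp C\ni y}}|w(C)|\,\e{a|\supp C|} \leq a
\ee
by an induction closely mirroring the one at the end of the proof of Theorem \ref{thm uniqueness}: bounding $|w(C)|\leq\frac{\beta^n}{n!}\prod_k\|\Phi_{X_k}\|$ and summing first over $X_1\ni y$, then over the subsequent $X_k$ intersecting the previously selected sets, one extracts a factor of order $\beta\|\Phi_{X_1}\|\,\e{a|X_1|}(1+a)^{|X_1|}$ per step, and the hypothesis $\beta\|\Phi\|_{\e{a}(1+a)}\leq a$ closes the iteration. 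Once this Kotecky-Preiss bound is in hand, the standard tree-graph inequality converts it into the claimed estimate
\be
1+\sum_{k\geq2}k\sum_{C_2,\ldots,C_k\in\caC_\Lambda}|\varphi(C_1,\ldots,C_k)|\prod_{i=2}^{k}|w(C_i)| \leq \e{a|\supp C_1|}.
\ee
The main technical obstacle is the careful calibration of the two factors $\e{a}$ and $(1+a)$ in the norm: the exponential factor $\e{a|X|}$ is needed to absorb the weight $\e{a|\supp C|}$ demanded by the tree-graph bound, while $(1+a)^{|X|}$ controls the proliferation of descendant sets at each step of the induction.
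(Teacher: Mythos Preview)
Your proposal is correct and follows essentially the same approach as the paper: expand $Z_\Lambda$ as a power series, factor the normalized trace over disjoint cluster supports to obtain the polymer representation $Z_\Lambda=\sum_k \frac1{k!}\sum_{\text{disjoint }C_1,\ldots,C_k}\prod w(C_i)$, then invoke the standard cluster expansion from \cite{Uel} after verifying the Koteck\'y--Preiss criterion via the inductive quantities $R_m$ exactly as you describe. Your identification of the roles of $\e{a}$ and $(1+a)$ matches the paper's inequality \eqref{induction ineq}, and the final bound then follows from \cite[Theorem~1]{Uel}.
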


We remark that, historically, the ``clusters'' of the expansion are the connected sets of ${\rm Conn}(k)$ in Eq.\ \eqref{def phi} rather than our $C_{i}$s. Clusters are often grouped according to their supports, which yields the ``polymer'' expansion. But we find it better to keep the $C_{i}$s as they are, without resummation.

\begin{proof}
Clearly,
\be
\label{expansion Z}
Z_{\Lambda} = \tr \e{-\beta \sum_{X \subset \Lambda} \Phi_{X}} = \sum_{n\geq0} \frac{\beta^{n}}{n!} \sum_{X_{1},\dots,X_{n} \subset \Lambda} \tr \Phi_{X_{1}} \dots \Phi_{X_{n}}.
\ee
We group the sets $X_{1}, \dots, X_{n}$ in clusters. We get
\be
Z_{\Lambda} = \sum_{k\geq0} \frac1{k!} \sumtwo{C_{1},\dots,C_{k} \in \caC_{\Lambda}}{\rm disjoint} w(C_{1}) \dots w(C_{k}).
\ee
The sum is restricted on ``disjoint'' clusters such that $\supp C_{i} \cap \supp C_{j} = \emptyset$ for all $i \neq j$. This expression fits the framework of the method of cluster expansion. A sufficient condition for its convergence \cite{GK,KP,Uel} is that there exists $a>0$ such that
\be
\label{KP crit}
\sumtwo{C' \in \caC_{\Lambda}}{\supp C' \cap \supp C \neq \emptyset} |w(C')| \e{a |\supp C'|} \leq a |\supp C|,
\ee
for all $C \in \caC_{\Lambda}$. Once \eqref{KP crit} is proved, Theorem \ref{thm part fct} follows immediately from e.g.\ \cite[Theorem 1]{Uel}.

Let $n(C)$ denote the number of sets that constitute the cluster $C$. We have
\be
|w(C)| \leq \frac{\beta^{n(C)}}{n(C)!} \prod_{i=1}^{n(C)} \| \Phi_{X_{i}} \|.
\ee
Let $R_{0}=0$, and, for $m\geq1$,
\be
R_{m} = \sup_{x \in \bbZ^{d}} \sumtwo{C \in \caC_{\Lambda}, \supp C \ni x}{n(C) \leq m} \frac{\beta^{n(C)}}{n(C)!} \prod_{i=1}^{n(C)} \| \Phi_{X_{i}} \| \e{a |\supp C|}.
\ee
We show that $R_{m} \leq a$ for all $m$ (and all $\Lambda$); this implies \eqref{KP crit}. We prove it by induction by means of the inequality \eqref{induction ineq} below. We now give a careful derivation.

Let $x \in \Lambda$, and let us consider an order on the subsets of $\Lambda$ with the property that $X \prec X'$ if $X \ni x \notin X'$. If $f$ is a nonnegative function on subsets of $\Lambda$, and writing $f(C) = \prod f(X_{i})$, we have
\be
\begin{split}
\frac1{n!} \sumtwo{C = (X_{1}, \dots, X_{n}) \in \caC_{\Lambda}}{\supp C \ni x} &\prod_{i=1}^{n} f(X_{i}) \leq \sumtwo{X_{1} \prec \dots \prec X_{n}}{X_{1} \ni x, (X_{1},\dots,X_{n}) \in \caC_{\Lambda}} \prod_{i=1}^{n} f(X_{i}) \\
&= \sum_{X_{1} \ni x} f(X_{1}) \sum_{k\geq0} \frac1{k!} \sumthree{C_{1}, \dots, C_{k} \in \caC_{\Lambda}, {\rm disjoint}}{n(C_{1}) + \dots + n(C_{k}) = n-1}{X_{1} \prec C_{i}, \supp C_{i} \cap X_{1} \neq \emptyset \, \forall i} \prod_{i=1}^{k} \frac1{n(C_{i})!} f(C_{i}).
\end{split}
\ee
The inequality in the first line is due to the case of identical sets, $X_{i}=X_{j}$ for some $i \neq j$.
In the last sum, the constraint $X_{1} \prec C_{i}$ means that $X_{1}$ is smaller than all the sets of $C_{i}$. It follows that
\be
\label{induction ineq}
\begin{split}
R_{m} &\leq \beta \sup_{x\in\bbZ^{d}} \sum_{X \ni x} \| \Phi_{X} \| \e{a|X|} \prod_{y\in X} \biggl( 1 + \sumtwo{C \in \caC_{\Lambda}, \supp C \ni y}{n(C) \leq m-1} \frac{\beta^{n(C)}}{n(C)!} \prod_{i=1}^{n(C)} \| \Phi_{X_{i}} \| \e{a |\supp C|} \biggr) \\
&\leq \beta \sup_{x \in \bbZ^{d}} \sum_{X \ni x} \| \Phi_{X} \| \bigl( \e{a} (1 + R_{m-1}) \bigr)^{|X|}.
\end{split}
\ee
Using the induction hypothesis $R_{m-1} \leq a$ and the assumption of the theorem, we get $R_{m} \leq a$. This proves \eqref{KP crit}.
\end{proof}

\subsection{Thermodynamic limit and expectations of local observables}

Next, we consider the expectation of observables. Let $A \in \caA_{\Lambda}$. We let $\supp A$ denote the support of the observable $A$; it is equal to the smallest set $X$ such that $A \in \caA_{X}$. We are interested in the expectation
\be
\langle A \rangle = \frac1{Z_{\Lambda}} \tr A \e{-\beta H_{\Lambda}}.
\ee
A similar expansion than \eqref{expansion Z} gives
\be
\label{mean expansion}
\tr A \e{-\beta H_{\Lambda}} = \sum_{k\geq0} \frac1{k!} \sumtwo{C_{A}, C_{1}, \dots, C_{k}}{\rm disjoint} w_{A}(C_{A}) w(C_{1}) \dots w(C_{k}),
\ee
where
\be
w_{A}(C_{A}) = \frac{\beta^{n}}{n!} \tr A \Phi_{X_{1}} \dots \Phi_{X_{n}}.
\ee
Here, $C_{A} = (X_{0}, X_{1}, \dots, X_{n})$ is a cluster such that $X_{0} = \supp A$ by definition.
$n=0$ is possible in which case $w_{A}(C_{A}) = \tr A$.

Under the same assumption as in Theorem \ref{thm part fct}, the method of cluster expansion applies and it gives
\be
\tr A \e{-\beta H_{\Lambda}} = \sum_{C_{A}} w_{A}(C_{A}) \exp\biggl\{ \sum_{k\geq1} \frac1{k!} \sumtwo{C_{1},\dots, C_{k} \in \caC_{\Lambda}}{C_{A}, C_{1}, \dots, C_{k} \, {\rm disjoint}} \varphi(C_{1},\dots,C_{k}) \prod_{i=1}^{k} w(C_{i}) \biggr\}.
\ee
This can be combined with the expression for $Z_{\Lambda}$ in Theorem \ref{thm part fct}, because of cancellations, we obtain
\be
\langle A \rangle = \sum_{C_{A}} w_{A}(C_{A}) \exp\biggl\{ \sum_{k\geq1} \frac1{k!} \sumtwo{C_{1},\dots,C_{k} \in \caC_{\Lambda}}{(\cup_{i} \supp C_{i}) \cap \supp C_{A} \neq \emptyset} \varphi(C_{1},\dots,C_{k}) \prod_{i=1}^{k} w(C_{i}) \biggr\}.
\ee
This expression makes it possible to take the thermodynamic limit $\Lambda \nearrow \bbZ^{d}$ as all sums converge absolutely and uniformly.

\subsection{Decay of two-point correlation functions}

Let $b(X)$ be a nonnegative function of finite subsets of $\bbZ^{d}$. The larger this function, the better the decay. We assume a slightly stronger condition on the interaction, namely that there exists $a>0$ such that
\be
\label{stronger condition}
\beta \sup_{x \in \bbZ^{d}} \sum_{X \ni x} \| \Phi_{X} \| \e{\frac32 a|X| + b(X)} < a.
\ee

Given two sets $X, Y \subset \bbZ^{d}$, let
\be
\mu(X,Y) = \min_{n\geq1} \mintwo{X_{1},\dots,X_{n}}{X_{i} \cap X_{i+1} \neq \emptyset \; \forall i=0,\dots,n+1} b(X_{1}) + \dots + b(X_{n}).
\ee
In the second minimum, we set $X_{0}=X$ and $X_{n+1}=Y$.

\begin{theorem}
\label{thm decay corr}
Assume that the interaction $\Phi$ satisfies the condition \eqref{stronger condition}. Then we have
\[
\bigl| \langle AB \rangle - \langle A \rangle \langle B \rangle \bigr| \leq k(A,B) \, \e{-\mu(\supp A, \supp B)}
\]
with
\[
k(A,B) = \|A\| \, \|B\| \, \bigl( a |\supp A| + a |\supp B| + 3a^{2} |\supp A| \, |\supp B| \bigr).
\]
\end{theorem}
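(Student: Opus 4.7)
The plan is to apply the cluster-expansion machinery of Section~\ref{sec high} in parallel to $\langle AB\rangle$ and $\langle A\rangle\langle B\rangle$, observe that the contributions from ``disconnected'' configurations cancel in the difference, and extract the factor $e^{-\mu(\supp A,\supp B)}$ from the surviving terms by using the extra $e^{b(X)}$ slack built into the hypothesis \eqref{stronger condition}. Mimicking \eqref{mean expansion} with two source operators instead of one produces
\[
\langle AB\rangle = \sum_{C_A,C_B} w_A(C_A)\,w_B(C_B)\,\exp\Bigl\{\sum_{k\ge 1}\tfrac{1}{k!}\sumtwo{C_1,\dots,C_k\in\caC_\Lambda}{(\cup_i \supp C_i)\cap(\supp C_A\cup\supp C_B)\neq\emptyset}\varphi(C_1,\dots,C_k)\prod_{i=1}^k w(C_i)\Bigr\},
\]
with the usual compatibility constraints among $C_A$, $C_B$, and the $C_i$'s. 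The corresponding expansion of $\langle A\rangle\langle B\rangle$ has the same sum over anchors, but the exponent splits into two independent exponentials, one restricted to clusters meeting $\supp C_A$ only and the other to clusters meeting $\supp C_B$ only.

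Taking the difference, every configuration in which $\supp C_A\cap\supp C_B=\emptyset$ and in which no cluster $C_i$ in the exponent intersects both $\supp C_A$ and $\supp C_B$ factors identically in the two expressions and cancels. The surviving configurations are precisely those containing a chain $X_0=\supp A, X_1,\dots,X_n, X_{n+1}=\supp B$ of pairwise intersecting sets; such a chain must lie either inside $C_A$, inside $C_B$, or span across $C_A$, through one of the exponential clusters, and into $C_B$. These three alternatives will produce the three terms $a|\supp A|$, $a|\supp B|$, and $3a^{2}|\supp A|\,|\supp B|$ in the final constant $k(A,B)$.

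For any connecting chain, the definition of $\mu$ gives $\prod_{j=1}^n e^{-b(X_j)}\le e^{-\mu(\supp A,\supp B)}$. Since \eqref{stronger condition} contains an unused factor $e^{b(X)}$ beyond what is strictly required for convergence in Theorem~\ref{thm part fct}, I would trade this slack set by set along the connecting chain to produce the claimed decay. The residual sums over the remaining cluster structure are then controlled by the same induction that established \eqref{induction ineq}, but rooted at sites of $\supp A$ and/or $\supp B$, with the factors $\|A\|$ and $\|B\|$ pulled out from the anchored weights $w_A$ and $w_B$. The main obstacle I expect is the combinatorial bookkeeping of the cancellation between the two expansions: one must verify that every non-cancelling configuration yields a connecting chain without double-counting across the three surviving categories, and that the residual sums assemble into the stated polynomial prefactor rather than something larger.
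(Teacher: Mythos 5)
Your route is the paper's own: expand $\langle AB\rangle$ and $\langle A\rangle\langle B\rangle$ in clusters, cancel unlinked configurations, pay for the decay $\e{-\mu(\supp A,\supp B)}$ with the extra factor $\e{b(X)}$ that \eqref{stronger condition} carries beyond the bare convergence criterion, and control the residual sums by the induction behind \eqref{induction ineq} (this is exactly the paper's strengthened criterion \eqref{KP crit bis}, with weight $\e{\frac32 a|\supp C|+b(C)}$). So this is not a different method; but two points in your sketch are genuine gaps rather than bookkeeping. First, your displayed expansion of $\langle AB\rangle$ is wrong as written: when the interaction sets link $\supp A$ to $\supp B$, the whole configuration is a \emph{single} cluster anchored at both operators, and its weight $w_{AB}(C_{AB})=\frac{\beta^{n}}{n!}\tr AB\,\Phi_{X_{1}}\cdots\Phi_{X_{n}}$ does not factorize as $w_{A}(C_{A})\,w_{B}(C_{B})$, since traces of products do not split. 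The paper therefore introduces the doubly anchored clusters $C_{AB}$ as a separate class alongside the pairs of disjoint anchored clusters, and this class supplies one of the three contributions to $k(A,B)$; it cannot be hidden inside ``compatibility constraints.''

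Second, the cancellation you invoke configuration by configuration is precisely the step you defer, and it is not automatic: the exponents in the two expansions run over different constrained families (clusters avoiding $\supp C_{A}\cup\supp C_{B}$ versus clusters avoiding $\supp C_{A}$ and $\supp C_{B}$ separately), and in $\langle A\rangle\langle B\rangle$ the anchors $C_{A}$ and $C_{B}$ may overlap, a situation with no counterpart in $\langle AB\rangle$. The paper does not redo this combinatorics by hand; it quotes the exact truncated-correlation identity \eqref{tr corr} from \cite[Theorem 2]{Uel}, which expresses $\langle AB\rangle-\langle A\rangle\langle B\rangle$ through the three pinned sums $\hat Z_{\Lambda}(C_{AB})$, $\hat Z_{\Lambda}(C_{A},C_{B})$ (disjoint anchors, linked through unanchored clusters) and $\hat Z_{\Lambda}(C_{A})\hat Z_{\Lambda}(C_{B})$ (overlapping anchors), and then bounds these via \cite[Theorem 3]{Uel} using \eqref{KP crit bis}; the factor $\e{-\mu}$ is extracted as you propose, and the three pinned sums produce the three terms of $k(A,B)$. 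Your three ``categories'' do not match this decomposition — in particular the overlapping-anchors term, which only arises on the $\langle A\rangle\langle B\rangle$ side and survives the difference, is absent from your list — so without establishing an identity of the type \eqref{tr corr} (or an equivalent inclusion--exclusion argument) your outline does not yet yield the stated prefactor.
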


As $A$ and $B$ are moved away from each other, the decay is given by $\e{-\mu(\cdot)}$. Decay is exponential if the interactions are finite-range or exponentially decaying.

\begin{proof}
An expansion similar to \eqref{mean expansion} holds in the case where $A$ is replaced by the product of two operators, $AB$. We denote $C_{AB}$ the clusters of the type $(\supp A, \supp B, X_{1},\dots, X_{n})$; $n=0$ is not possible unless $\supp A \cap \supp B \neq \emptyset$. The corresponding weight is
\be
w_{AB}(C_{AB}) = \frac{\beta^{n}}{n!} \tr AB \Phi_{X_{1}} \dots \Phi_{X_{n}}.
\ee
Expansion of the exponential gives
\be
\begin{split}
\tr AB \e{-\beta H_{\Lambda}} &= \sum_{k\geq0} \frac1{k!} \sumtwo{C_{AB}, C_{1}, \dots, C_{k}}{\rm disjoint} w_{AB}(C_{AB}) w(C_{1}) \dots w(C_{k}) \\
&+ \sum_{k\geq0} \frac1{k!} \sumtwo{C_{A}, C_{B}, C_{1}, \dots, C_{k}}{\rm disjoint} \frac1{k!} w_{A}(C_{A}) w_{B}(C_{B}) w(C_{1}) \dots w(C_{k}).
\end{split}
\ee

It is convenient to use the following notation, which mirrors that of \cite[Section 3]{Uel}.
\be
\begin{split}
Z_{\Lambda}(C_{A}) &= \sum_{k\geq0} \frac1{k!} \sumtwo{C_{1},\dots,C_{k} \in \caC_{\Lambda \setminus \supp C_{A}}}{\rm disjoint} w(C_{1}) \dots w(C_{k}), \\
Z_{\Lambda}(C_{A}, C_{B}) &= \sum_{k\geq0} \frac1{k!} \sumtwo{C_{1},\dots,C_{k} \in \caC_{\Lambda \setminus (\supp C_{A} \cup \supp C_{B})}}{\rm disjoint} w(C_{1}) \dots w(C_{k}), \\
\hat Z_{\Lambda}(C_{A}) &= \sum_{k\geq0} (k+1) \sum_{C_{1},\dots,C_{k} \in \caC_{\Lambda}} \varphi(C_{A}, C_{1}, \dots, C_{k}) w(C_{1}) \dots w(C_{k}), \\
\hat Z_{\Lambda}(C_{A}, C_{B}) &= \sum_{k\geq0} (k+1) (k+2) \sum_{C_{1},\dots,C_{k} \in \caC_{\Lambda}} \varphi(C_{A}, C_{B}, C_{1}, \dots, C_{k}) w(C_{1}) \dots w(C_{k}).
\end{split}
\ee
We have
\be
\tr AB \e{-\beta H_{\Lambda}} = \sum_{C_{AB}} w_{AB}(C_{AB}) Z_{\Lambda}(C_{AB}) + \sumtwo{C_{A},C_{B}}{\rm disjoint} w_{A}(C_{A}) w_{B}(C_{B}) Z_{\Lambda}(C_{A},C_{B}).
\ee
It follows from \cite[Theorem 2]{Uel} that
\be
\label{tr corr}
\begin{split}
\langle AB \rangle - \langle A \rangle \langle B \rangle = &\sum_{C_{AB}} w_{AB}(C_{AB}) \hat Z_{\Lambda}(C_{AB}) + \sumtwo{C_{A}, C_{B}}{\rm disjoint} w_{A}(C_{A}) w_{B}(C_{B}) \hat Z_{\Lambda}(C_{A}, C_{B}) \\
&- \sumtwo{C_{A}, C_{B}}{\supp C_{A} \cap \supp C_{B} \neq \emptyset} w_{A}(C_{A}) w_{B}(C_{B}) \hat Z_{\Lambda}(C_{A}) \hat Z_{\Lambda}(C_{B}).
\end{split}
\ee

Let $b(C) = \sum_{i} b(X_{i})$ for $C = (X_{1},\dots,X_{n})$. Adapting the proof of \eqref{KP crit}, one can show that
\be
\label{KP crit bis}
\sumtwo{C' \in \caC_{\Lambda}}{\supp C' \cap \supp C \neq \emptyset} |w(C')| \e{\frac32 a |\supp C'| + b(C')} \leq a |\supp C|.
\ee
This allows to use \cite[Theorem 3]{Uel}. For $C_{A} = (\supp A, X_{1},\dots,X_{n})$, we get
\be
\e{b(C_{A})} |\hat Z_{\Lambda}(C_{A})| \leq \e{a |\supp C_{A}| + b(\supp A)}.
\ee
The same bound applies to $C_{B}$; as for $C_{AB}$, we have
\be
\label{last bound}
\e{\mu(\supp C_{A}, \supp C_{B})} |\hat Z_{\Lambda}(C_{A},C_{B})| \leq \e{\frac32 a |\supp C_{A}| + \frac32 a |\supp C_{B}|}.
\ee
Theorem \ref{thm decay corr} follows from the expression \eqref{tr corr} and the bounds \eqref{KP crit bis}--\eqref{last bound}.
\end{proof}

\section{Correlation inequalities for quantum spin systems}

We now consider a more restricted setting.
Let $S^{1}, S^{2}, S^{3}$ be spin operators in $\bbC^{N}$ that satisfy $[S^{1},S^{2}] = \ii S^{3}$, and with the other commutation relations obtained by cyclic permutations of indices. Let $S_{x}^{i} = S^{i} \otimes \Id_{\Lambda\setminus\{x\}}$, $i=1,2,3$. The Hamiltonian depends on real coupling parameters (exchange couplings), $J_{xy}^{i} = J_{yx}^{i}$, and is given by
\be
\label{def spin Ham}
H_{\Lambda} = -\tfrac12 \sum_{x,y \in \Lambda} \Bigl( J^{1}_{xy} S_{x}^{1} S_{y}^{1} + J_{xy}^{2} S_{x}^{2} S_{y}^{2} + J_{xy}^{3} S_{x}^{3} S_{y}^{3} \Bigr).
\ee
Here, $\Lambda$ is an arbitrary finite set.
From now on, we use the usual trace, denoted $\Tr$, rather than the normalized trace, $\tr$. With $Z_{\Lambda} = \Tr \e{-\beta H_{\Lambda}}$ denoting the partition function, the correlation functions at inverse temperature $\beta$ are given by
\be
\langle S_{0}^{i} S_{x}^{i} \rangle = \frac1{Z_{\Lambda}} \Tr (S_{0}^{i} S_{x}^{i} \e{-\beta H_{\Lambda}}).
\ee

The case $J_{xy}^{1} = J_{xy}^{2} = 0$, for all $x,y\in\Lambda$, corresponds to the Ising model, which is in fact a classical model. The case $J_{xy}^{3}=0$ and $J_{xy}^{1} = J_{xy}^{2}$, for all $x,y$, corresponds to the quantum XY model. And the symmetric case, $J_{xy}^{1} = J_{xy}^{2} = J_{xy}^{3}$, corresponds to the isotropic Heisenberg model. Positive values of the couplings correspond to ferromagnetic order, while negative values of the couplings correspond to antiferromagnetism.

It is natural to expect that correlations are stronger among those components of the spins that correspond to stronger coupling parameters in the Hamiltonian. This is the content of the next theorem. The inequalities stated there do not appear to have been noticed before, except for the spin-$\frac{1}{2}$ XY model corresponding to $N=2$: Assuming that $\Lambda$ is a rectangular subset of $\bbZ^{d}$ and $\|x\|=1$, the first inequality follows from reflection positivity \cite{Kubo}; for general $\Lambda$ and general $x$, it follows from a random loop representation \cite{Uel2}.

\begin{theorem}
\label{thm quantum correl ineq}
Assume that, for all $x,y \in \Lambda$, the couplings satisfy
\[
|J_{xy}^{2}| \leq J_{xy}^{1}.
\]
Then we have that
\[
\bigl| \langle S_{0}^{2} S_{x}^{2} \rangle \bigr| \leq \langle S_{0}^{1} S_{x}^{1} \rangle,
\]
for all $x \in \Lambda$.
More generally, for all $x_{1},\dots x_{k} \in \Lambda$ and $j_{1},\dots,j_{k} \in \{1,2\}$,
\[
\bigl| \langle S_{x_{1}}^{j_{1}} \dots S_{x_{k}}^{j_{k}} \rangle \bigr| \leq \langle S_{x_{1}}^{1} \dots S_{x_{k}}^{1} \rangle.
\]
\end{theorem}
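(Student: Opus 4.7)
The plan is to rewrite $e^{-\beta H_{\Lambda}}$ and the observable $\prod_{l=1}^{k}S^{j_{l}}_{x_{l}}$ in a common ladder-operator expansion adapted to $S^{3}$, in such a way that every matrix element of $e^{-\beta H_{\Lambda}}$ in the joint $S^{3}$-eigenbasis is nonnegative and each factor $S^{j_{l}}$ expands with unit-modulus phases. A triangle inequality will then immediately yield the bound, with the $j_{l}\equiv 1$ case being exactly the case where all those phases equal $+1$.

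Concretely, I would work in the tensor-product basis $\{|\underline m\rangle\}$ of $\bigotimes_{x\in\Lambda}\bbC^{N}$ diagonalizing all $S^{3}_{x}$, and introduce $S^{\pm}=S^{1}\pm\ii S^{2}$, whose matrix elements $\langle m'|S^{\pm}|m\rangle$ are nonneg by the standard ladder formulas. The key algebraic identity is
\[
J^{1}_{xy}S^{1}_{x}S^{1}_{y}+J^{2}_{xy}S^{2}_{x}S^{2}_{y}=\tfrac{1}{4}\sum_{\sigma,\tau\in\{+,-\}}\bigl(J^{1}_{xy}-J^{2}_{xy}\,\sigma\tau\bigr)\,S^{\sigma}_{x}S^{\tau}_{y},
\]
and the four coefficients $J^{1}_{xy}-J^{2}_{xy}\sigma\tau$ are nonneg precisely under the hypothesis $|J^{2}_{xy}|\leq J^{1}_{xy}$. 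Splitting $H_{\Lambda}=H_{0}+V$ with $H_{0}=-\tfrac{1}{2}\sum J^{3}_{xy}S^{3}_{x}S^{3}_{y}$ diagonal (possibly sign-indefinite) and $V$ the off-diagonal part above, the Dyson expansion represents $\langle\underline m|e^{-\beta H_{\Lambda}}|\underline m'\rangle$ as a time-integrated sum over sequences of $V$-insertions whose weights are products of: the nonneg coupling coefficients, nonneg $S^{\pm}$ matrix elements, and strictly positive diagonal factors $e^{-\tau H_{0}}$. Hence every matrix element of $e^{-\beta H_{\Lambda}}$ in this basis is nonneg, and in particular $Z_{\Lambda}>0$.

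Next, expand each observable factor as $S^{j_{l}}=\tfrac{1}{2}(\eta^{j_{l}}_{+}S^{+}+\eta^{j_{l}}_{-}S^{-})$ with $\eta^{1}_{\pm}=1$ and $\eta^{2}_{\pm}=\mp\ii$, so that $|\eta^{j_{l}}_{\sigma}|=1$ for all $l,\sigma$, while $\eta^{1}_{\sigma}=1$. Distributing the product,
\[
\Tr\Bigl(\prod_{l=1}^{k}S^{j_{l}}_{x_{l}}\,e^{-\beta H_{\Lambda}}\Bigr)=\frac{1}{2^{k}}\sum_{\underline\sigma\in\{+,-\}^{k}}\Bigl(\prod_{l=1}^{k}\eta^{j_{l}}_{\sigma_{l}}\Bigr)R_{\underline\sigma},\qquad R_{\underline\sigma}:=\Tr\Bigl(\prod_{l=1}^{k}S^{\sigma_{l}}_{x_{l}}\,e^{-\beta H_{\Lambda}}\Bigr).
\]
Inserting completeness in the $S^{3}$-basis (with extra intermediate resolutions at sites where several $x_{l}$ coincide) shows that $R_{\underline\sigma}$ is a sum of products of nonneg matrix elements of the $S^{\sigma_{l}}$'s with the already-nonneg matrix elements of $e^{-\beta H_{\Lambda}}$, so $R_{\underline\sigma}\geq 0$.

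The triangle inequality then yields $\bigl|\Tr(\prod_{l}S^{j_{l}}_{x_{l}}\,e^{-\beta H_{\Lambda}})\bigr|\leq 2^{-k}\sum_{\underline\sigma}R_{\underline\sigma}$, and for $j_{l}\equiv 1$ the right-hand side equals $\Tr(\prod_{l}S^{1}_{x_{l}}\,e^{-\beta H_{\Lambda}})$ because every $\eta^{1}_{\sigma_{l}}=1$. Dividing by $Z_{\Lambda}>0$ gives the theorem. The only step that requires the hypothesis is the nonneg rearrangement of the off-diagonal part of $H_{\Lambda}$ displayed above, and this is the main---quite mild---point to check; the rest is systematic ladder-operator bookkeeping, and the potential negativity of the $J^{3}$ couplings is harmless because $e^{-\tau H_{0}}$ is a positive operator for any real diagonal $H_{0}$.
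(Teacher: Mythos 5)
Your proposal is correct and follows essentially the same route as the paper: the same rewriting of $J^{1}_{xy}S^{1}_{x}S^{1}_{y}+J^{2}_{xy}S^{2}_{x}S^{2}_{y}$ in terms of $S^{\pm}$ with nonnegative coefficients under $|J^{2}_{xy}|\leq J^{1}_{xy}$, leading to entrywise nonnegativity of the Gibbs factor in the joint $S^{3}$-basis and a termwise comparison of the observable with its all-$S^{1}$ counterpart. The only differences are cosmetic: you establish the nonnegativity via a Dyson/Duhamel expansion rather than the paper's Trotter product formula, and you handle the observable by expanding it in $S^{\pm}$ with unimodular phases and applying the triangle inequality, whereas the paper bounds its matrix elements entrywise by those of $S^{1}_{0}S^{1}_{x}$ --- a presentation that, if anything, makes the general multi-point statement slightly more explicit.
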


Further inequalities can be generated using symmetries. Some inequalities hold for the staggered two-point function $(-1)^{|x|} \langle S_{0}^{i} S_{x}^{i} \rangle$.

\begin{proof}
Let $S \in \frac12 \bbN$ such that $2S+1 = N$, and let $|a\rangle$, $a \in \{-S,\dots,S\}$ denote basis elements of $\bbC^{2S+1}$. Let the operators $S^{\pm}$ be defined by
\be
S^{+}|a\rangle = \sqrt{S(S+1) - a(a+1)} \; |a+1\rangle, \qquad S^{-}|a\rangle = \sqrt{S(S+1) - (a-1)a} \; |a-1\rangle,
\ee
with the understanding that $S^{+} |S\rangle = S^{-} |-S\rangle = 0$. Then let $S^{1} = \frac12 (S^{+}+S^{-})$, $S^{2} = \frac1{2\ii} (S^{+}-S^{-})$, and $S^{3} |a\rangle = a |a\rangle$. It is well-known that these operators satisfy the spin commutation relations. Further, the matrix elements of $S^{1}, S^{\pm}$ are all nonnegative, and the matrix elements of $S^{2}$ are all less than or equal to those of $S^{1}$ in absolute values. Using the Trotter formula and multiple resolutions of the identity, we have
\be
\label{Trotter estimate}
\begin{split}
\bigl| \Tr S_{0}^{2} S_{x}^{2} \e{-\beta H_{\Lambda}} \bigr| &\leq \lim_{N\to\infty} \sum_{\sigma_{0}, \dots, \sigma_{N} \in \{-S,\dots,S\}^{\Lambda}} \biggl| \langle \sigma_{0} | S_{0}^{2} S_{x}^{2} | \sigma_{1} \rangle \\
&\langle \sigma_{1} | \e{\frac\beta N \sum J_{yz}^{3} S_{y}^{3} S_{z}^{3}} | \sigma_{1} \rangle \langle \sigma_{1} | \Bigl( 1 + \frac\beta N \sum_{y,z \in \Lambda} ( J_{yz}^{1} S_{y}^{1} S_{z}^{1} + J_{yz}^{2} S_{y}^{2} S_{z}^{2} ) \Bigr) | \sigma_{2} \rangle \\
\dots &\langle \sigma_{N} | \e{\frac\beta N \sum J_{yz}^{3} S_{y}^{3} S_{z}^{3}} | \sigma_{N} \rangle \langle \sigma_{N} | \Bigl( 1 + \frac\beta N \sum_{y,z \in \Lambda} ( J_{yz}^{1} S_{y}^{1} S_{z}^{1} + J_{yz}^{2} S_{y}^{2} S_{z}^{2} ) \Bigr) | \sigma_{0} \rangle \biggr|.
\end{split}
\ee
Observe that the matrix elements of all operators are nonnegative, except for $S_{0}^{2} S_{x}^{2}$. Indeed, this follows from
\be
J_{yz}^{1} S_{y}^{1} S_{z}^{1} + J_{yz}^{2} S_{y}^{2} S_{z}^{2} = \tfrac14 (J_{yz}^{1} - J_{yz}^{2}) (S_{y}^{+} S_{z}^{+} + S_{y}^{-} S_{z}^{-}) + \tfrac14 (J_{yz}^{1} + J_{yz}^{2}) (S_{y}^{+} S_{z}^{-} + S_{y}^{-} S_{z}^{+}) .
\ee
We get an upper bound for the right side of \eqref{Trotter estimate} by replacing $|\langle \sigma_{0}| S_{0}^{2} S_{x}^{2} |\sigma_{1}\rangle|$ with $\langle\sigma_{0}| S_{0}^{1} S_{x}^{1} |\sigma_{1}\rangle$. We have obtained
\be
\bigl| \Tr S_{0}^{2} S_{x}^{2} \e{-\beta H_{\Lambda}} \bigr| \leq \Tr S_{0}^{1} S_{x}^{1} \e{-\beta H_{\Lambda}},
\ee
which proves the first claim. The second claim can be proved exactly the same way.
\end{proof}

\begin{corollary}
Assume that for all $x,y \in \Lambda$, the couplings satisfy
\[
J_{xy}^{1} = J_{xy}^{2} \geq 0.
\]
Then we have for all $x,y,z,u \in \Lambda$
\[
\frac\partial{\partial J_{xy}^{1}} \langle S_{z}^{2} S_{u}^{2} \rangle \leq \frac\partial{\partial J_{xy}^{1}} \langle S_{z}^{1} S_{u}^{1} \rangle.
\]
\end{corollary}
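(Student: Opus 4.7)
The plan is to deduce the inequality from Theorem \ref{thm quantum correl ineq} by a one-sided derivative argument, avoiding any explicit Duhamel/Kubo computation.

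Fix $x, y, z, u \in \Lambda$ and introduce
\[
f(\epsilon) := \langle S_z^1 S_u^1 \rangle_\epsilon - \langle S_z^2 S_u^2 \rangle_\epsilon,
\]
where $\langle\cdot\rangle_\epsilon$ is the thermal expectation for the Hamiltonian obtained from \eqref{def spin Ham} by replacing $J_{xy}^1$ with $J_{xy}^1 + \epsilon$ and leaving every other coupling---including $J_{xy}^2$---unchanged. Since $H_\Lambda$ acts on a finite-dimensional space and depends linearly on the couplings, $f$ is real-analytic in $\epsilon$, and the claim is equivalent to $f'(0) \geq 0$.

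The first key step is the vanishing $f(0) = 0$. At $\epsilon = 0$ the assumption $J^1_{x'y'} = J^2_{x'y'}$ for every pair renders $H_\Lambda$ invariant under the $U(1)$ group of rotations around the $3$-axis, generated by $\sum_v S_v^3$. A quarter turn sends $S^1 \mapsto -S^2$ and $S^2 \mapsto S^1$, hence $S_z^1 S_u^1 \mapsto S_z^2 S_u^2$ (and vice versa); since the thermal state inherits the symmetry, $\langle S_z^1 S_u^1 \rangle_0 = \langle S_z^2 S_u^2 \rangle_0$.

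The second step is $f(\epsilon) \geq 0$ for every $\epsilon \geq 0$. At the perturbed couplings the hypothesis $|J^2_{x'y'}| \leq J^1_{x'y'}$ of Theorem \ref{thm quantum correl ineq} holds on every pair $(x',y')$: on pairs other than $(x,y)$ this is the unchanged equality $J^1 = J^2 \geq 0$, and at $(x,y)$ it reads $J^2_{xy} = |J^2_{xy}| \leq J^1_{xy} + \epsilon$. Theorem \ref{thm quantum correl ineq} therefore applies and yields $\langle S_z^2 S_u^2 \rangle_\epsilon \leq |\langle S_z^2 S_u^2\rangle_\epsilon| \leq \langle S_z^1 S_u^1 \rangle_\epsilon$, i.e.\ $f(\epsilon) \geq 0$.

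Combining $f(0) = 0$, $f(\epsilon)\geq 0$ for $\epsilon\geq 0$, and real-analyticity forces $f'(0) \geq 0$, which is the stated inequality. The only substantive step is the vanishing at $\epsilon = 0$: without the exact $U(1)$ symmetry at the symmetric coupling point, Theorem \ref{thm quantum correl ineq} would provide only a lower bound for $f$ too weak to control its derivative at the origin.
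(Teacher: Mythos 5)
Your proof is correct, but it takes a genuinely different route from the paper. The paper computes the derivative explicitly via the Duhamel two-point function, writing $\frac1\beta \frac{\partial}{\partial J_{xy}^{1}} \langle S_{z}^{i} S_{u}^{i} \rangle = (S_{x}^{1} S_{y}^{1}, S_{z}^{i} S_{u}^{i}) - \langle S_{x}^{1} S_{y}^{1} \rangle \langle S_{z}^{i} S_{u}^{i} \rangle$, then extends the Trotter/matrix-element argument of Theorem \ref{thm quantum correl ineq} to the Duhamel function to get $|(S_{x}^{1}S_{y}^{1}, S_{z}^{2}S_{u}^{2})| \leq (S_{x}^{1}S_{y}^{1}, S_{z}^{1}S_{u}^{1})$, and finishes with the same $U(1)$ symmetry identity $\langle S_{z}^{2}S_{u}^{2}\rangle = \langle S_{z}^{1}S_{u}^{1}\rangle$ that you use. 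You instead treat Theorem \ref{thm quantum correl ineq} as a black box: the inequality $\langle S_{z}^{2}S_{u}^{2}\rangle_{\epsilon} \leq \langle S_{z}^{1}S_{u}^{1}\rangle_{\epsilon}$ holds for all $\epsilon \geq 0$ along the perturbed family, it is saturated at $\epsilon = 0$ by the rotation symmetry, and differentiability (real-analyticity of finite-volume Gibbs expectations in the couplings) then forces the derivative inequality. Your argument is more elementary in that it avoids proving the Duhamel extension (which the paper only asserts is ``not hard''), while the paper's route yields the Duhamel correlation inequality itself as a by-product, which is of independent interest and would also apply at points where the inequality is not saturated. One small point worth making explicit: since $J_{xy}^{1}=J_{yx}^{1}$, the perturbation should be applied symmetrically to the pair (or the derivative understood accordingly); this affects only an overall positive factor and not the sign, so your conclusion stands.
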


\begin{proof}
For $i=1,2,3$, we have
\be
\frac1\beta \frac\partial{\partial J_{xy}^{1}} \langle S_{z}^{i} S_{u}^{i} \rangle  = (S_{x}^{1} S_{y}^{1}, S_{z}^{i} S_{u}^{i}) - \langle S_{x}^{1} S_{y}^{1} \rangle \langle S_{z}^{i} S_{u}^{i} \rangle,
\ee
where $(A,B)$ denotes the Duhamel two-point function,
\be
(A,B) = \frac1{Z_{\Lambda}} \int_{0}^{1} \Tr A \e{-s\beta H_{\Lambda}} B \e{-(1-s) \beta H_{\Lambda}} \dd s.
\ee
It is not hard to extend the proof of Theorem \ref{thm quantum correl ineq} to the Duhamel function, so that
\be
\bigl| (S_{x}^{1} S_{y}^{1}, S_{z}^{2} S_{u}^{2}) \bigr| \leq (S_{x}^{1} S_{y}^{1}, S_{z}^{1} S_{u}^{1}).
\ee
Further, we have $\langle S_{z}^{2} S_{u}^{2} \rangle = \langle S_{z}^{1} S_{u}^{1} \rangle$ by symmetry. The result follows.
\end{proof}

\section{Decay of correlations due to symmetries}
\label{sec MW}

In this section we prove a variant of the Mermin-Wagner theorem. Our method of proof only works for systems that are effectively two-dimensional. The first result, with an explicit bound on the two-point correlation function, is due to Fisher and Jasnow \cite{FJ}. Unfortunately, it only yields logarithmic decay. The decay is, however, expected to be power-law, and this was proven by McBryan and Spencer \cite{MS} in a short and lucid article that exploits complex rotations. Power-law decay was proven for some quantum systems in \cite{BFK,Ito}. The proofs use Fourier transform and the Bogolubov inequality, and they are limited to regular two-dimensional lattices. A much more general result was obtained by Koma and Tasaki using complex rotations \cite{KT}. The present proof is similar to theirs but somewhat simpler.
Absence of ordering and of symmetry breaking was proven in \cite{FP1,FP2}.

We assume that $J_{xy}^{1} = J_{xy}^{2}$ for all $x,y$.
The decay of correlations is measured by the following expression:
\be
\xi_{\beta}(x) = \suptwo{(\phi_{y}) \in \bbR^{\Lambda}}{\phi_{x}=0} \biggl[ \phi_{0} - 2\beta S^{2} \sum_{y,z \in \Lambda} |J_{yz}^{1}| \bigl( \cosh(\phi_{y}-\phi_{z}) - 1 \bigr) \biggr].
\ee
The solution of this variational problem is essentially a discrete harmonic function. We can estimate it explicitly in the case of ``2D-like'' graphs with nearest-neighbor couplings. Let $\Lambda$ denote a graph, i.e.\ a finite set of vertices and a set of edges, and let $d(x,y)$ denote the graph distance, i.e.\ the length of the shortest path that connects $x$ and $y$.

\begin{lemma}
\label{lem xi}
Assume that $J_{xy}^{i}=0$ whenever $d(x,y)\neq1$ and let $J = \max|J_{xy}^{i}|$. Assume in addition that there exists a constant $K$ such that, for any $\ell\geq0$,
\[
\#\bigl\{ \{x,y\} \subset \Lambda : d(0,x) = \ell, d(0,y) = \ell+1, \text{ and } d(x,y)=1 \bigr\} \leq K(\ell+1).
\]
Then there exists $C = C(\beta,S,J,K)$, which does not depend on $x$, such that
\[
\xi_{\beta}(x) \geq \frac1{16\beta J S^{2} K} \log\bigl( d(0,x)+1 \bigr) - C.
\]
\end{lemma}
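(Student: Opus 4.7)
My plan is to construct an admissible competitor in the variational principle defining $\xi_\beta(x)$ and to compute the resulting lower bound. Set $R = d(0,x)$ (the case $R = 0$ is trivial) and, for a parameter $a>0$ to be optimized, take
\[
\phi_y = a \log\frac{R+1}{\min(d(0,y),R)+1}.
\]
Then $\phi_y \geq 0$, $\phi_x = 0$ as required by the constraint, and $\phi_0 = a\log(R+1)$, which supplies the logarithm demanded by the conclusion.

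Since $J_{yz}^{1}=0$ unless $d(y,z)=1$, the sum in $\xi_\beta(x)$ reduces to nearest-neighbor pairs; because $\phi_y$ depends only on $d(0,y)$, edges within a single shell contribute zero. For an edge between shells $\ell$ and $\ell+1$ with $\ell \leq R-1$,
\[
|\phi_y - \phi_z| = a\log\frac{\ell+2}{\ell+1} \leq \frac{a}{\ell+1} \leq a.
\]
In the range of $a$ where the elementary estimate $\cosh t - 1 \leq t^{2}$ is valid (numerically, for $|t|\lesssim 3$), this yields $\cosh(\phi_y-\phi_z)-1 \leq a^{2}/(\ell+1)^{2}$. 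The hypothesis caps the number of unordered edges between shells $\ell$ and $\ell+1$ by $K(\ell+1)$; the ordered sum over $(y,z)\in\Lambda^{2}$ doubles this, so
\[
\sum_{y,z\in\Lambda}|J_{yz}^{1}|\bigl(\cosh(\phi_y-\phi_z)-1\bigr) \leq 2JKa^{2}\sum_{\ell=0}^{R-1}\frac{1}{\ell+1} \leq 2JKa^{2}\bigl(1+\log(R+1)\bigr).
\]

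Substituting into the definition of $\xi_\beta$ and collecting gives
\[
\xi_\beta(x) \geq \bigl(a - 4\beta JKS^{2}a^{2}\bigr)\log(R+1) - 4\beta JKS^{2}a^{2}.
\]
The coefficient of $\log(R+1)$ is maximized at $a_\star = 1/(8\beta JKS^{2})$, producing exactly the prefactor $1/(16\beta JKS^{2})$ claimed; the residual term contributes to the $x$-independent constant $C = C(\beta,S,J,K)$. This argument is effective in the natural regime $\beta JKS^{2} \gtrsim 1/24$, where $a_\star$ falls inside the admissible range of the $\cosh$ bound.

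The principal obstacle, apart from routine bookkeeping of the shell counts, the boundary shell $\ell=R-1$, and the harmonic sum $\sum(\ell+1)^{-1} = \log(R+1)+O(1)$, is the regime dependence of the $\cosh$ estimate: since $\cosh t - 1 \geq t^{2}/2$, a $t^{2}$-type upper bound only holds for $|t|$ in a bounded range, which caps the admissible values of $a$. For very small $\beta$, where $a_\star$ would exceed this range, one can either sharpen the $\cosh$ estimate (e.g., via $\cosh t - 1 \leq \tfrac12 t^{2} e^{|t|}$) together with a modified test function that keeps $|\phi_y-\phi_z|$ uniformly bounded, or simply note that the high-temperature exponential decay established in Section \ref{sec high} already implies a strictly stronger conclusion than the power-law bound claimed here.
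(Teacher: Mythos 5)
Your construction and the main computation coincide with the paper's proof: the same logarithmic test function $\phi_y=a\log\frac{d(0,x)+1}{\min(d(0,y),d(0,x))+1}$, the same reduction to edges between consecutive shells with the count $K(\ell+1)$ and a factor $2$ for the ordered sum, the same harmonic-sum estimate, and the same optimization at $a=(8\beta JKS^{2})^{-1}$ giving the prefactor $(16\beta JKS^{2})^{-1}$. The genuine gap is your treatment of the regime where $a_\star=(8\beta JKS^{2})^{-1}$ is large. The lemma is stated for all $\beta>0$, and neither of your two fallbacks closes it. First, Lemma \ref{lem xi} is a purely deterministic lower bound on the variational quantity $\xi_{\beta}(x)$; the exponential decay of correlations in Section \ref{sec high} is a statement about a different object, proved under the much more restrictive smallness condition \eqref{stronger condition}, so it cannot substitute for the lemma (and even at the level of the final correlation bound there is no guarantee that the cluster-expansion regime reaches up to $\beta JKS^{2}\approx 1/24$, leaving an uncovered window of intermediate temperatures). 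Second, using $\cosh t-1\le\tfrac12 t^{2}\e{|t|}$ with the same test function replaces $a^{2}$ by $\tfrac12 a^{2}\e{a\log 2}$ and the optimization no longer produces the claimed prefactor $(16\beta JS^{2}K)^{-1}$, while the ``modified test function'' you allude to is never specified.

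The missing observation, which is how the paper argues (its ``Taylor expansions of the logarithm and of the hyperbolic cosine, there exist $C,C'$''), is that no modification is needed: with the logarithmic profile the edge increment across shells $\ell$ and $\ell+1$ is $a\log\frac{\ell+2}{\ell+1}\le a/(\ell+1)$, which tends to zero as $\ell\to\infty$ \emph{for any fixed} $a$. Hence there is a threshold $\ell_{0}(a)$, depending only on $a$ (i.e.\ only on $\beta,S,J,K$) and not on $x$, beyond which $\cosh t-1\le t^{2}$ applies to every edge; the finitely many shells $\ell<\ell_{0}(a)$ contribute at most $4\beta S^{2}JK\,\ell_{0}(a)^{2}\bigl(\cosh(a\log 2)-1\bigr)$, an $x$-independent quantity that is absorbed into the constant $C=C(\beta,S,J,K)$ which the lemma explicitly allows. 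With this absorption your computation goes through verbatim for every $\beta>0$, yielding $\xi_{\beta}(x)\ge\bigl[a-4\beta JKS^{2}a^{2}\bigr]\log\bigl(d(0,x)+1\bigr)-C$ and, at $a=(8\beta JKS^{2})^{-1}$, exactly the stated bound; no case distinction on $\beta$ and no appeal to Section \ref{sec high} is required.
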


\begin{proof}
With $c$ to be chosen later, let
\be
\phi_{y} = \begin{cases} c \log\frac{d(0,x)+1}{d(0,y)+1} & \text{if } d(0,y) \leq d(0,x), \\ 0 & \text{otherwise.} \end{cases}
\ee
Then
\be
\xi_{\beta}(x) \geq c \log(d(0,x)+1) - 4\beta S^{2} JK \sum_{\ell=0}^{d(0,x)-1} \bigl( \cosh ( c \log\tfrac{\ell+2}{\ell+1} ) - 1 \bigr) (\ell+1).
\ee
From Taylor expansions of the logarithm and of the hyperbolic cosine, there exist $C,C'$ such that
\be
\begin{split}
\xi_{\beta}(x) &\geq c \log(d(0,x)+1) - 4\beta S^{2} JK c^{2} \sum_{\ell=1}^{d(0,x)} \frac1\ell - C' \\
&\geq \bigl[ c - 4\beta S^{2} JK c^{2} \bigr] \log(d(0,x)+1) - C.
\end{split}
\ee
The optimal choice is $c = (8\beta S^{2} JK)^{-1}$.
\end{proof}

\begin{theorem}
Assume that $J_{xy}^{1} = J_{xy}^{2}$ for all $x,y \in \Lambda$. Then, for $i =1,2$, we have
\[
\bigl| \langle S_{0}^{i} S_{x}^{i} \rangle \bigr| \leq S^{2} \e{-\xi_{\beta}(x)}.
\]
\end{theorem}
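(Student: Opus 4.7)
The plan is to adapt the McBryan--Spencer complex rotation argument, in the quantum form developed by Koma and Tasaki, exploiting the U(1) symmetry enforced by $J^1_{yz} = J^2_{yz}$. First I would reduce the bound to the off-diagonal two-point function $\langle S^+_0 S^-_x\rangle$: since $e^{i\theta \sum_y S^3_y}$ commutes with $H_\Lambda$, the symmetry yields $\langle S^1_0 S^1_x\rangle = \langle S^2_0 S^2_x\rangle$ and $\langle S^\pm_0 S^\pm_x\rangle = 0$, so $|\langle S^i_0 S^i_x\rangle| = \tfrac12 |\langle S^+_0 S^-_x\rangle|$ for $i=1,2$; it suffices to bound the latter.

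Next, for any $\phi \in \bbR^\Lambda$ with $\phi_x = 0$, introduce the invertible (but non-unitary) operator $V := \exp(-\sum_y \phi_y S^3_y)$. Since $V S^{\pm}_y V^{-1} = e^{\mp \phi_y} S^{\pm}_y$ and $V$ commutes with each $S^3_y$, cyclicity of the trace yields
\[
\Tr(S^+_0 S^-_x \, e^{-\beta H_\Lambda}) = e^{-\phi_0}\,\Tr(S^+_0 S^-_x \, e^{-\beta H_\phi}),
\]
with $H_\phi := V H_\Lambda V^{-1}$. Using $e^u = \cosh u + \sinh u$ I would decompose $H_\phi = A + iK$ into Hermitian and purely anti-Hermitian parts, where
\[
A = -\tfrac12 \sum_{y,z} J^3_{yz} S^3_y S^3_z - \tfrac12 \sum_{y,z} J^1_{yz}\cosh(\phi_y - \phi_z)(S^1_y S^1_z + S^2_y S^2_z)
\]
carries the ``$\cosh$-enhanced'' couplings, while $K$ (Hermitian) collects the terms proportional to $\sinh(\phi_y - \phi_z)(S^2_y S^1_z - S^1_y S^2_z)$.

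The crux---and what I expect to be the main obstacle---is the trace-norm bound $\|e^{-\beta H_\phi}\|_1 \leq \Tr(e^{-\beta A})$ for the \emph{non-Hermitian} operator $e^{-\beta H_\phi}$. I would derive it from the Lie--Trotter formula $e^{-\beta H_\phi} = \lim_N (e^{-\beta A/N} e^{-i\beta K/N})^N$ together with the generalized Schatten--Hölder inequality $\|X_1 \cdots X_{2N}\|_1 \leq \prod_i \|X_i\|_{p_i}$ with $\sum_i 1/p_i = 1$: choosing $p_i = N$ on each of the $N$ positive factors $e^{-\beta A/N}$ (so $\|\cdot\|_N = \Tr(e^{-\beta A})^{1/N}$) and $p_i = \infty$ on each of the $N$ unitary factors $e^{-i\beta K/N}$ (so $\|\cdot\|_\infty = 1$) gives a bound uniform in $N$. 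Combined with $|\Tr(BY)| \leq \|B\|_\infty \|Y\|_1$ applied to $B = S^+_0 S^-_x$, this produces $|\Tr(S^+_0 S^-_x\, e^{-\beta H_\phi})| \leq \|S^+_0 S^-_x\| \cdot \Tr(e^{-\beta A})$. To close the argument, I compare $A$ to $H_\Lambda$: since $A - H_\Lambda \geq -2S^2 \sum_{y,z}|J^1_{yz}|(\cosh(\phi_y - \phi_z) - 1)\, \bbone$ (using $\|S^1_y S^1_z + S^2_y S^2_z\| \leq 2S^2$ per bond), monotonicity of $\Tr(e^{\,\cdot\,})$ gives $\Tr(e^{-\beta A}) \leq Z_\Lambda \exp\{2\beta S^2 \sum_{y,z}|J^1_{yz}|(\cosh(\phi_y - \phi_z) - 1)\}$. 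Assembling all the pieces and taking the infimum over admissible $\phi$ converts the exponent into exactly $-\xi_\beta(x)$, yielding $|\langle S^+_0 S^-_x\rangle| \leq \|S^+_0 S^-_x\|\, e^{-\xi_\beta(x)}$ and hence the theorem after accounting for the factor $\tfrac12$ from the first step.
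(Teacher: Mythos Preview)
Your proposal is correct and follows essentially the same complex-rotation/Trotter/H\"older route as the paper. The only difference is bookkeeping: the paper splits the rotated Hamiltonian into \emph{three} pieces $H_\Lambda + B + C$ (with $B^*=B$, $C^*=-C$) so that H\"older applied with exponents $(N,\infty,\infty)$ directly yields the factor $Z_\Lambda\, e^{\beta\|B\|}$, whereas you split into two pieces (Hermitian $A$ plus anti-Hermitian $iK$), use H\"older to get $\Tr e^{-\beta A}$, and then invoke operator monotonicity of $\Tr e^{(\cdot)}$ to compare $A$ with $H_\Lambda$---one extra inequality, but the same outcome.
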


In the case of 2D-like graphs, we can use Lemma \ref{lem xi} and we obtain algebraic decay with a power greater than $(8\beta JS^{2}K)^{-1}$.

\begin{proof}
We use the method of complex rotations. Let
\be
S_{y}^{\pm} = S_{y}^{1} \pm \ii S_{y}^{2}.
\ee
One can check that for any $a \in \bbC$, we have
\be
\e{a S_{y}^{3}} S_{y}^{\pm} \e{-a S_{y}^{3}} = \e{\pm a} S_{y}^{\pm}.
\ee
We have $\langle S_{0}^{+} S_{x}^{-} \rangle = 2 \langle S_{0}^{1} S_{x}^{1} \rangle$ and this is nonnegative by Theorem \ref{thm quantum correl ineq}.
The Hamiltonian \eqref{def spin Ham} can be rewritten as
\be
H_{\Lambda} = -\tfrac12 \sum_{y,z \in \Lambda} \bigl( J_{yz}^{1} S_{y}^{+} S_{z}^{-} + J_{yz}^{3} S_{y}^{3} S_{z}^{3} \bigr)
\ee
Given numbers $\phi_{y}$, let
\be
A = \prod_{y\in\Lambda} \e{\phi_{y} S_{y}^{3}}.
\ee
Then
\be
\Tr S_{0}^{+} S_{x}^{-} \e{-\beta H_{\Lambda}} = \Tr A S_{0}^{+} S_{x}^{-} A^{-1} \e{-\beta A H_{\Lambda} A^{-1}}.
\ee
We now compute the rotated Hamiltonian.
\be
\begin{split}
A H_{\Lambda} A^{-1} &= -\tfrac12 \sum_{y,z \in \Lambda} \bigl( J_{yz}^{1} \e{\phi_{y}-\phi_{z}} S_{y}^{+} S_{z}^{-} + J_{yz}^{3} S_{y}^{3} S_{z}^{3} \bigr) \\
&= H_{\Lambda} -\tfrac12 \sum_{y,z \in \Lambda} J_{yz}^{1} \bigl( \cosh(\phi_{y}-\phi_{z}) - 1 \bigr) \; S_{y}^{+} S_{z}^{-} -\tfrac12 \sum_{y,z \in \Lambda} J_{yz}^{1} \sinh(\phi_{y}-\phi_{z}) \; S_{y}^{+} S_{z}^{-} \\
&\equiv H_{\Lambda} + B + C.
\end{split}
\ee
Notice that $B^{*}=B$ and $C^{*}=-C$. We obtain
\be
\Tr S_{0}^{+} S_{x}^{-} \e{-\beta H_{\Lambda}} = \e{\phi_{0} - \phi_{x}} \Tr S_{0}^{+} S_{x}^{-} \e{-\beta H_{\Lambda} - \beta B - \beta C}.
\ee
We now estimate the trace in the right side using the Trotter product formula and the H\"older inequality for traces. Recall that $\|B\|_{s} = (\Tr |B|^{s})^{1/s}$, with $\|B\|_{\infty} = \|B\|$ being the usual operator norm.
\be
\begin{split}
\Tr S_{0}^{+} S_{x}^{-} \e{-\beta H_{\Lambda} - \beta B - \beta C} &= \lim_{N\to\infty} \Tr S_{0}^{+} S_{x}^{-} \Bigl( \e{-\frac\beta N H_{\Lambda}} \e{-\frac\beta N B} \e{-\frac\beta N C} \Bigr)^{N} \\
&\leq \lim_{N\to\infty} \| S_{0}^{+} S_{x}^{-} \|_{\infty} \; \bigl\| \e{-\frac\beta N H_{\Lambda}} \bigr\|_{N}^{N} \; \bigl\| \e{-\frac\beta N B} \bigr\|_{\infty}^{N} \; \bigl\| \e{-\frac\beta N C} \bigr\|_{\infty}^{N}.
\end{split}
\ee
Observe now that $\|S_{0}^{+} S_{x}^{-}\| = 2 S^{2}$, $\| \e{-\frac\beta N H_{\Lambda}} \|_{N}^{N} = Z_{\Lambda}$, $\| \e{-\frac\beta N B} \|^{N} \leq \e{\beta \|B\|}$, and $\| \e{-\frac\beta N C} \| = 1$. The theorem then follows from
\be
\|B\| \leq S^{2} \sum_{y,z \in \Lambda} |J_{yz}^{1}| \bigl( \cosh(\phi_{y}-\phi_{z}) - 1 \bigr).
\ee
\end{proof}

\medskip
\noindent
{\it Acknowledgments:} We are grateful to the Institute for Advanced Study at Princeton for hospitality. J.F.\ thanks Thomas Spencer for his generosity and interest and the ``Fund for Math'' and ``The Robert and Luisa Fernholz Visiting Professorship Fund'' for financial support.


\begin{thebibliography}{99}

\bibitem{BFK}
C.A.~Bonato, J.~Fernando Perez, A.~Klein,
{\em The Mermin-Wagner phenomenon and cluster properties of one- and two-dimensional systems},
J. Stat. Phys. 29, 159--175 (1982)

\bibitem{BKUel}
C. Borgs, R. Koteck\'y, D. Ueltschi,
{\em Low temperature phase diagrams for quantum perturbations of classical spin systems},
Commun. Math. Phys. 181, 409--446 (1996)

\bibitem{BR}
O.~Bratteli, D.W.~Robinson,
{\em Operator Algebras and Quantum Statistical Mechanics 2},
Springer (1981)

\bibitem{DFF}
N. ~Datta, R. ~Fern\'andez, J. ~Fr\"ohlich, 
{\em Low-temperature phase diagrams of quantum lattice
systems, I. Stability for quantum perturbations of classical systems with finitely many
ground states}, 
J. Stat. Phys. 84, 455--534 (1996)

\bibitem{Dir}
S.~Dirren,
diploma thesis, ETH Z\"urich

\bibitem{DLS} 
F. ~Dyson, E.H. ~Lieb, B. ~Simon, 
{\em Phase transitions in quantum spin systems with isotropic and non isotropic interactions}, 
J. Stat. Phys. 18, 335--383, (1978)

 \bibitem{FJ}
M.E.~Fisher, D.~Jasnow,
{\em Decay of order in isotropic systems of restricted dimensionality. II. Spin systems},
Phys. Rev. B 3, 907--924 (1971)

\bibitem{FILS}
J. ~Fr\"ohlich, R. ~Israel, E.H. ~Lieb, B. ~Simon,
{\em Phase transitions and reflection positivity, I.
General theory and long range lattice models},
Commun. Math. Phys. 62, 1--34 (1978)

\bibitem{FP1}
J.~Fr\"ohlich, C.-\'E.~Pfister,
{\em On the absence of spontaneous symmetry breaking and of crystalline ordering in two-dimensional systems},
Comm. Math. Phys. 81, 277--298 (1981)

\bibitem{FP2}
J.~Fr\"ohlich, C.-\'E.~Pfister,
{\em Absence of crystalline ordering in two dimensions},
Comm. Math. Phys. 104, 697--700 (1986)

\bibitem{FR}
J.~Fr\"ohlich, P.-F.~Rodriguez,
{\em Some applications of the Lee-Yang theorem},
J. Math. Phys. 53, 095218 (2012)

\bibitem{GK}
C.~Gruber, H.~Kunz,
{\em General properties of polymer systems},
Comm. Math. Phys. 22, 133--161 (1971)

\bibitem{GRS}
F.~Guerra, L.~Rosen, B.~Simon,
{\em Correlation inequalities and the mass gap in $P(\phi)_{2}$},
Commun. Math. Phys. 41, 19--32 (1975)

\bibitem{Ito}
K.R.~Ito,
{\em Clustering in low-dimensional $SO(N)$-invariant statistical models with long-range interactions},
J. Stat. Phys. 29, 747--760 (1982)

\bibitem{KT}
T.~Koma, H.~Tasaki,
{\em Decay of superconducting and magnetic correlations in one-and two-dimensional Hubbard models},
Phys. Rev. Lett. 68, 3248--3251 (1992)

\bibitem{KP}
R.~Koteck\'y, D.~Preiss,
{\em Cluster expansion for abstract polymer models},
Commun. Math. Phys. 103, 491--498 (1986)

\bibitem{Kubo}
K.~Kubo,
{\em Existence of long-range order in the XY model},
Phys. Rev. Lett. 61 110--112 (1988).

\bibitem{LP}
J.L.~Lebowitz, O.~Penrose,
{\em On the exponential decay of correlations},
Commun. Math. Phys. 39, 165--184 (1974)

\bibitem{MS}
O.A.~McBryan, T.~Spencer,
{\em On the decay of correlations in $SO(n)$-symmetric ferromagnets},
Commun. Math. Phys. 53, 299--302 (1977)

\bibitem{Pfi}
C.-\'E.~Pfister,
{\em Analyticity properties of the correlation functions for the anisotropic Heisenberg model},
Commun. Math. Phys. 41, 109--117 (1975)

\bibitem{Sim}
B.~Simon,
{\em The Statistical Mechanics of Lattice Gases},
Princeton University Press (1993)

\bibitem{Uel}
D.~Ueltschi,
{\em Cluster expansions and correlation functions},
Moscow Math. J. 4, 511--522 (2004)

\bibitem{Uel2}
D.~Ueltschi,
{\em Random loop representations for quantum spin systems},
J. Math. Phys. 54, 083301 (2013)

\end{thebibliography}
\end{document}